\newtheorem{theorem}{Theorem}[section]
\newtheorem{proposition}[theorem]{Proposition}
\newtheorem{corollary}[theorem]{Corollary}
\newtheorem{lemma}[theorem]{Lemma}
\newtheorem{remark}[theorem]{Remark}
\newtheorem{example}[theorem]{Example}
\newtheorem{definition}[theorem]{Definition}
\definecolor{lightgray}{gray}{0.9}
\definecolor{Lavender}{cmyk}{0,0.28,0,0}
\definecolor{SkyBlue}{cmyk}{0.50,0,0.08,0}
\definecolor{SpringGreen}{cmyk}{0.26,0,0.76,0}
\definecolor{FireBrick}{cmyk}{0,0.809,0.809,0.302}
\definecolor{Choco}{cmyk}{0.0,0.5,0.857,0.176}
\definecolor{DarkOlive}{cmyk}{0.206,0,0.561,0.58}
\definecolor{OliveGrab}{cmyk}{0.246,0,0.754,0.443}
\definecolor{Orange}{cmyk}{0,0.3529,1,0}
\definecolor{Magenta}{cmyk}{0,1,0,0}
\date{}
\begin{document}

\author{Vladimir Dragovi\'c}
\address{Mathematical Institute SANU, Kneza Mihaila 36, Belgrade,
Serbia\newline\indent Mathematical Physics Group, University of
Lisbon, Portugal} \email{vladad@mi.sanu.ac.rs}

\author{Milena Radnovi\'c}
\address{Mathematical Institute SANU, Kneza Mihaila 36, Belgrade, Serbia}
\email{milena@mi.sanu.ac.rs}

\title{Pseudo-integrable billiards and arithmetic dynamics}

\thanks{The research which lead to this paper was partially
supported by the Serbian Ministry of Education and Science (Project
no.~174020: \emph{Geometry and Topology of Manifolds and Integrable
Dynamical Systems}) and by Mathematical Physics Group of the
University of Lisbon (Project \emph{Probabilistic approach to finite
and infinite dimensional dynamical systems, PTDC/MAT/104173/2008}).
V.~D.~ is grateful to Prof.~Marcelo Viana and IMPA (Rio de Janeiro, Brazil) and M.~R.~ to Vered Rom-Kedar, the Weizmann Institute of Science (Rehovot, Israel), and the associateship scheme of \emph{The Abdus Salam} ICTP (Trieste, Italy) for their hospitality and support in various stages of work on this paper.
}

\keywords{Confocal quadrics, Poncelet theorem, periodic billiard trajectories, interval exchange}

\begin{abstract}
We introduce a new class of billiard systems in the plane, with boundaries formed by finitely many arcs of confocal conics such that they contain some reflex angles.
Fundamental dynamical, topological, geometric, and arithmetic properties of such billiards are studied.
The novelty, caused by reflex angles on boundary, induces invariant leaves of higher genera and dynamical behaviour different from Liouville-Arnold’s theorem.
Its analogue is derived from the Maier theorem on measured foliations.
A local version of Poncelet theorem is formulated and necessary algebro-geometric conditions for periodicity are presented.
The connection with interval exchange transformation is established together with Keane's type conditions for minimality.
It is proved that the dynamics depends on arithmetic of rotation numbers, but not on geometry of a given confocal pencil of conics.
\end{abstract}

\maketitle
\tableofcontents

\section{Introduction}\label{sec:intro}
We introduce a new class of billiard systems in a plane, with boundary formed by finitely many arcs of confocal conics and with a finite number of such that they contain some reflex angles.

By \emph{a billiard within a given domain} we will assume here a dynamical system where a particle -- material point is moving freely inside the domain in an Euclidean plane, and reflecting reflecting absolutely elastically on the boundary (see \cite{KozTrBIL}).
This means that the trajectories are polygonal lines with vertices lying on the domain boundary, with congruent impact and reflection angles at each vertex, while the particle speed remains constant, see Figure \ref{fig:reflection}.
\begin{figure}[h]
\psset{unit=0.75}
\begin{pspicture}(-3,-2.5)(3,2.5)
  \SpecialCoor

\pswedge*[linecolor=gray!50](2.81,-.253){1}{80}{143}
\pswedge*[linecolor=gray!50](2.81,-.253){1}{198}{260}

\psline[linewidth=1.3pt, ArrowInside=->, arrowscale=1.4, linecolor=red]
	(0,1.8)
	(!8 sqrt 6.5 neg cos mul 5 sqrt 6.5 neg sin mul)
	(!5 sqrt 260 cos mul 2 sqrt 260 sin mul)

\psline[linewidth=1.3pt] 
(!2.81 0.050626 5 mul add -.253 0.35128 5 mul add)
(!2.81 0.050626 5 mul sub -.253 0.35128 5 mul sub)

\psellipse[linewidth=1.5pt](!8 sqrt 5 sqrt)

\end{pspicture}
\caption{Billiard reflection.}\label{fig:reflection}
\end{figure}
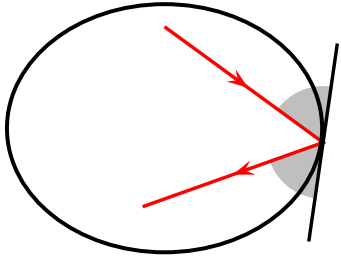

In this paper, we are going to discuss billiards in domains bounded by arcs of several confocal conics, see Figure \ref{fig:domains} for a few examples.
\begin{figure}[h]
\psset{unit=0.7}
\begin{pspicture}(-9,-2.5)(9,2.5)

% domen 1

\psellipse[linecolor=black, linewidth=1pt,fillstyle=solid, fillcolor=brown!70]
	(-6,0)(!8 sqrt 5 sqrt)
\psellipse[linecolor=black, linewidth=1pt,fillstyle=solid, fillcolor=white]
	(-6,0)(!5 sqrt 2 sqrt)

\pscircle[linestyle=none, fillstyle=solid, fillcolor=blue]
    (!3 sqrt 6 sub 0){0.1}
 \pscircle[linestyle=none, fillstyle=solid, fillcolor=blue]
    (!3 sqrt neg 6 sub 0){0.1}

% domen 2
\psellipse[linecolor=white, linewidth=0pt,fillstyle=solid, fillcolor=brown!70]
	(0,0)(!8 sqrt 5 sqrt)

\psframe[fillstyle=solid,fillcolor=white,linecolor=white](-3,-2)(-1.8,2)
\parametricplot[plotstyle=curve,linecolor=black,fillstyle=solid,fillcolor=white]
{-2.3}{2.3}{1 t t mul 2 div add sqrt neg t}

\psellipse[linecolor=black, linewidth=1pt,fillstyle=none]
	(0,0)(!8 sqrt 5 sqrt)

\pscircle[linestyle=none, fillstyle=solid, fillcolor=blue]
    (!3 sqrt 0){0.1}
 \pscircle[linestyle=none, fillstyle=solid, fillcolor=blue]
    (!3 sqrt neg 0){0.1}

% domen 3

\psellipse[linecolor=white, linewidth=0pt,fillstyle=solid, fillcolor=brown!70]
	(6,0)(!8 sqrt 5 sqrt)

\psframe[fillstyle=solid,fillcolor=white,linecolor=white](3,-2)(4.2,2)
\parametricplot[plotstyle=curve,linecolor=black,fillstyle=solid,fillcolor=white]
{-2.3}{2.3}{1 t t mul 2 div add sqrt neg 6 add t}

\psellipse[linecolor=black, linewidth=1pt,fillstyle=none]
	(6,0)(!8 sqrt 5 sqrt)

\psellipse[linecolor=brown!70, linewidth=0pt,fillstyle=solid, fillcolor=brown!70]
	(6,0)(!5 sqrt 2 sqrt)

\parametricplot[plotstyle=curve,linecolor=black,fillstyle=none]
{125}{235}{t cos 5 sqrt mul 6 add t sin 2 sqrt mul}

\pscircle[linestyle=none, fillstyle=solid, fillcolor=blue]
    (!3 sqrt 6 add 0){0.1}
 \pscircle[linestyle=none, fillstyle=solid, fillcolor=blue]
    (!3 sqrt neg 6 add 0){0.1}

\rput(-8.8,-1.8){a)}
\rput(-2.8,-1.8){b)}
\rput(3.2,-1.8){c)}

\end{pspicture}
\caption{Some domains bounded by arcs of confocal conics.}\label{fig:domains}
\end{figure}
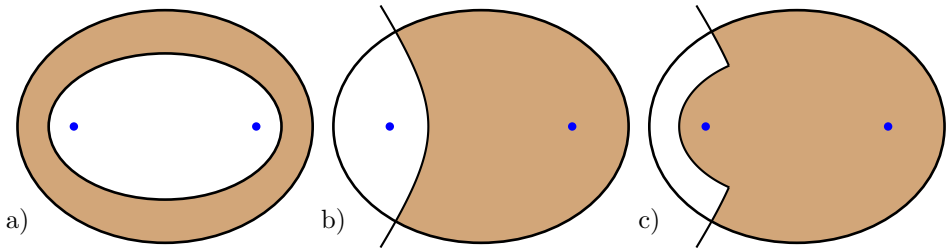

Boundaries of such domains may be non-smooth at isolated points, as it is the case for the domains shown in Figures \ref{fig:domains}b and \ref{fig:domains}c.
Since there is no tangent at such points, the billiard reflection cannot be defined there in the usual way.
However, note that two intersecting confocal quadrics are always orthogonal to each other.
If the two tangents at the meeting parts of the boundary at such a point form the convex right angle, then the reflection can be naturaly defined so that impact and reflecting segments coincide.
This definition is due to limit applied to nearby trajectories, see 
Figure \ref{fig:reflection.right}.
Notice that, because of this limit, it is natural also to count the reflection at such a point as two bounces.
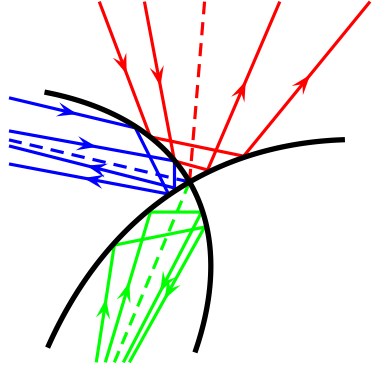
\begin{figure}[h]
\psset{unit=4}
\begin{pspicture}(-0.7,-0.7)(0.7,0.7)
  \SpecialCoor

\psset{linecolor=red,linewidth=1.2pt}
\psline[ArrowInside=->,arrowscale=1.4](-0.3,0.6)(-0.13,0.15)
\psline(-0.13,0.15)(0.18,0.086)
\psline[ArrowInside=->,arrowscale=1.4](0.18,0.086)(0.6,0.6)

\psline[ArrowInside=->,arrowscale=1.4](-0.15,0.6)(-0.05,0.07)
\psline(-0.05,0.07)(0.06,0.04)
\psline[ArrowInside=->,arrowscale=1.4](0.06,0.04)(0.3,0.6)

\psline[linestyle=dashed](0,0)(0.05,0.6)

\psset{linecolor=blue}
\psline[ArrowInside=->,arrowscale=1.4](-0.6,0.28)(-0.18,0.18)
\psline(-0.18,0.18)(-0.07,-0.04)
\psline[ArrowInside=->,arrowscale=1.4](-0.07,-0.04)(-0.6,0.06)

\psline[ArrowInside=->,arrowscale=1.4](-0.6,0.17)(-0.05,0.07)
\psline(-0.05,0.07)(-0.05,-0.02)
\psline[ArrowInside=->,arrowscale=1.4](-0.05,-0.02)(-0.6,0.12)

\psline[linestyle=dashed](0,0)(-0.6,0.14)

\psset{linecolor=green}
\psline[ArrowInside=->,arrowscale=1.4](-0.31,-0.6)(-0.25,-0.21)
\psline(-0.25,-0.21)(0.05,-0.15)
\psline[ArrowInside=->,arrowscale=1.4](0.05,-0.15)(-0.2,-0.6)

\psline[ArrowInside=->,arrowscale=1.4](-0.28,-0.6)(-0.13,-0.1)
\psline(-0.13,-0.1)(0.04,-0.1)
\psline[ArrowInside=->,arrowscale=1.4](0.04,-0.1)(-0.22,-0.6)

\psline[linestyle=dashed](0,0)(-0.25,-0.6)

\psset{linecolor=black,linewidth=2pt}

\pscustom{
	\rotate{30}
	\parametricplot[plotstyle=curve,fillstyle=none]
		{75}{110}{2 t cos mul 4 t sin mul 4 sub}
}

\pscustom{
	\rotate{30}
	\parametricplot[plotstyle=curve,fillstyle=none]
	{-30}{30}{2 t cos mul 2 sub 1 t sin mul}
}

\end{pspicture}
\caption{Reflections in right angles.}\label{fig:reflection.right}
\end{figure}

However, if the tangents form a reflex angle, the limit does not exist, thus the reflection cannot be defined, see Figure \ref{fig:reflection.concave}.
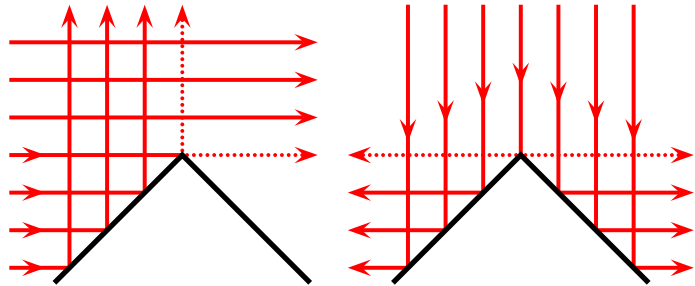
\begin{figure}[h]
\begin{pspicture}(-4.5,-2)(5,2)

\psset{linecolor=red,linewidth=1.5pt,arrowscale=1.4,dotsep=1pt}
\psline[arrows=->](-4.3,.5)(-0.2,.5)
\psline[arrows=->](-4.3,1)(-0.2,1)
\psline[arrows=->](-4.3,1.5)(-0.2,1.5)

\psline[ArrowInside=->](-4.3,0)(-3.5,0)
\psline(-3.5,0)(-2,0)
\psline[arrows=->,linestyle=dotted](-2,0)(-2,2)
\psline[arrows=->,linestyle=dotted](-2,0)(-0.2,0)

\psline[ArrowInside=->](-4.3,-0.5)(-3.5,-0.5)
\psline(-3.5,-0.5)(-2.5,-0.5)
\psline[arrows=->](-2.5,-0.5)(-2.5,2)

\psline[ArrowInside=->](-4.3,-1)(-3.5,-1)%(-3,1.8)
\psline(-3.5,-1)(-3,-1)
\psline[arrows=->](-3,-1)(-3,2)

\psline[ArrowInside=->](-4.3,-1.5)(-3.5,-1.5)
\psline[arrows=->](-3.5,-1.5)(-3.5,2)

\psline[ArrowInside=->](2.5,2)(2.5,0)
\psline[arrows=->,linestyle=dotted](2.5,0)(0.2,0)
\psline[arrows=->,linestyle=dotted](2.5,0)(4.8,0)

\psline[ArrowInside=->](3,2)(3,-0.5)
\psline[arrows=->](3,-0.5)(4.8,-0.5)

\psline[ArrowInside=->](3.5,2)(3.5,-1)
\psline[arrows=->](3.5,-1)(4.8,-1)

\psline[ArrowInside=->](4,2)(4,-1.5)
\psline[arrows=->](4,-1.5)(4.8,-1.5)

\psline[ArrowInside=->](2,2)(2,-0.5)
\psline[arrows=->](2,-0.5)(0.2,-0.5)

\psline[ArrowInside=->](1.5,2)(1.5,-1)
\psline[arrows=->](1.5,-1)(.2,-1)

\psline[ArrowInside=->](1,2)(1,-1.5)
\psline[arrows=->](1,-1.5)(.2,-1.5)

\psset{linewidth=2pt,linecolor=black}
\psline(0.8,-1.7)(2.5,0)(4.2,-1.7)
\psline(-0.3,-1.7)(-2,0)(-3.7,-1.7)

\end{pspicture}
\caption{Reflection near reflex angle.}\label{fig:reflection.concave}
\end{figure}
In the study of billiards within domains having reflex angles on the boundary (see \cite{ZorichFLAT}), of special interest are trajectories starting or being ended at the vertex of such an angle.
Such trajectories are called \emph{separatrices}.
A separatrix having both endpoins at vertices of reflex angles is called \emph{a saddle-connections}.
A saddle-connection with coinciding endpoints is called \emph{a homoclinic loop}.
All other billiard trajectories, those that never reach the vertex of an reflex angle, are called \emph{regular trajectories}.

Billiards in domains bounded by several confocal quadrics, without singular points where tangents form a reflex angle, were already studied by the authors: their periodic trajectories are described in \cites{DragRadn2004,DragRadn2006b} while their topological properties are discussed in \cite{DragRadn2009}.

In this work, we are focused to domains with reflex angles on the boundary.
We study fundamental dynamical, topological, geometric, and arithmetic properties of the corresponding billiards.
In the next section we prove the existence of a pair of independent Poisson commuting integrals.
The novelty of our systems, caused by reflex angles on the boundary, induces invariant leaves of higher genera -- see Propositions \ref{prop:genus3} and \ref{prop:N(M)}.
Its dynamical behaviour is different from Liouville-Arnold's theorem, see examples in Section \ref{sec:examples}.
An analogue of the Liouville-Arnold theorem is derived from the Maier theorem on measured foliations, see Theorem \ref{th:maier}.
A local version of Poncelet porism is formulated as Theorem \ref{th:poncelet} and necessary algebra-geometric conditions for periodicity are presented in Theorem \ref{th:cayley}.
A connection with interval exchange transformations is established in Section \ref{sec:interval} and it is proved that the dynamics depends on the arithmetic of the rotation numbers, but not on the geometry of a given confocal pencil of conics, see Theorem \ref{th:nezavisnost}.
In Section \ref{sec:keane}, we derive Keane's type conditions for minimality for interval exchange transformations that appear in such billiard systems.

\section{Elliptical billiards and confocal conics}\label{sec:ellbilliards}
Let us consider in this section billiards within an ellipse.

A general family of confocal conics in the plane can be represented in the following way:
\begin{equation}\label{eq:confocal}
\mathcal{C}_{\lambda}\ :\ \frac{x^2}{a-\lambda}+\frac{y^2}{b-\lambda}=1,
\quad\lambda\in\mathbf{R},
\end{equation}
with $a>b>0$ being constants.

By the famous \emph{Chasles' theorem} \cite{Chasles}, each segment of a given billiard trajectory is tangent to a fixed conic that is confocal to the boundary (see also
\cites{KozTrBIL,DragRadn2011book}).
This conic is called \emph{caustic} of the given trajectory.

Now, fix a constant $\alpha_0<b$ and consider billiard trajectories within confocal ellipses $\mathcal{C}_{\lambda}$ ($\lambda<\alpha_0$) having ellipse 
$\mathcal{C}_{\alpha_0}$ as caustic.

\begin{proposition}\label{prop:rotation}
There exist metrics $\mu$ on conic $\mathcal{C}_{\alpha_0}$ and function 
$$
\rho:(-\infty,\alpha_0)\to\mathbf{R}
$$
satisfying:
\begin{itemize}
\item
$\mu(\mathcal{C}_{\alpha_0})=1$;

\item
metric $\mu$ is non-atomic, i.e.\ $\mu(\{X\})=0$ for each point $X$ on 
$\mathcal{C}_{\alpha_0}$;

\item
$\mu(\ell)\neq0$ for each open arc $\ell$ of $\mathcal{C}_{\alpha_0}$;

\item
for any $\lambda<\alpha_0$, and each triplet of points
$X\in\mathcal{C}_{\alpha_0}$, $Y\in\mathcal{C}_{\alpha_0}$, $A\in\mathcal{C}_{\lambda}$, such that segments $XA$ and $AY$ satisfy the reflection law on $\mathcal{C}_{\lambda}$, the following equality holds:
$$
\mu(XY)=\rho(\lambda).
$$
\end{itemize}
\end{proposition}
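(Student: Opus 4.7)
The strategy is to realize $\mu$ as the normalized Lebesgue measure in a parameter $\varphi$ on $\mathcal{C}_{\alpha_0}$ in which the ``caustic map'' $T_\lambda:X\mapsto Y$ acts as a rigid rotation. Once this is established, the four stated properties follow at once: total mass $1$ by normalization, non-atomicity and strict positivity on open arcs because $\varphi$ is a smooth monotone change of parameter, and $\mu(XY)=\rho(\lambda)$ independently of the base point $X$ because $T_\lambda$ is then a translation.

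First, I would set up the caustic map. Parameterize $\mathcal{C}_{\alpha_0}$ by $\theta$ via
$X(\theta)=(\sqrt{a-\alpha_0}\cos\theta,\sqrt{b-\alpha_0}\sin\theta)$. Given $X\in\mathcal{C}_{\alpha_0}$ and $\lambda<\alpha_0$, the tangent line to $\mathcal{C}_{\alpha_0}$ at $X$ meets $\mathcal{C}_\lambda$ at a point $A$; by Chasles' theorem the line obtained by reflecting at $A$ off $\mathcal{C}_\lambda$ is again tangent to $\mathcal{C}_{\alpha_0}$, say at $Y$. Fixing an orientation of $\mathcal{C}_{\alpha_0}$ yields a well-defined homeomorphism $T_\lambda:\mathcal{C}_{\alpha_0}\to\mathcal{C}_{\alpha_0}$, and the triple $(X,A,Y)$ of the proposition is precisely $(X,A,T_\lambda(X))$.

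The crucial step is to exhibit a parameter $\varphi=\varphi(\theta)$, strictly increasing in $\theta$, such that $T_\lambda$ acts as $\varphi\mapsto\varphi+2\rho(\lambda)$ for some $\rho(\lambda)$ depending only on $\lambda$. This is the classical Jacobi linearization of the Poncelet map. One clean way to organize it: consider the elliptic curve $\Gamma:w^2=(a-z)(b-z)(\alpha_0-z)(\lambda-z)$ attached to the pair $\{\mathcal{C}_{\alpha_0},\mathcal{C}_\lambda\}$; a line tangent to both corresponds to a pair of points on $\Gamma$ lying over $z=\alpha_0$ and $z=\lambda$, and the abelian sum of the two tangency points with $\mathcal{C}_{\alpha_0}$ of consecutive segments (before and after reflection at $\mathcal{C}_\lambda$) is, by the addition law on $\Gamma$, a quantity depending only on $\lambda$. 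The coordinate $\varphi$ is then the associated abelian integral of the first kind, restricted to the sheet $z=\alpha_0$.

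With $\varphi$ in hand, set $\mu=d\varphi/L$ where $L=\oint d\varphi$, and read off $\rho(\lambda)$ as half the translation amount. All four bullet points are immediate, and an explicit expression for $\rho(\lambda)$ as a ratio of complete elliptic integrals drops out of the construction; this formula will be useful later for extracting the arithmetic of the rotation numbers. The main obstacle is precisely the linearization step: although classical, it requires either importing the Poncelet--Jacobi machinery or a direct verification that the $\varphi$-shift under $T_\lambda$ is constant, e.g.\ via the algebraic addition theorem on $\Gamma$ with a careful choice of base points. The orientation ambiguity in the definition of $T_\lambda$ must also be handled, but it only affects $\rho(\lambda)$ by a sign and does not alter $\mu$.
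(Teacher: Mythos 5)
Your overall strategy (linearize the caustic map $T_\lambda$ in an abelian-integral parameter and declare $\mu$ to be normalized Lebesgue measure in that parameter) is the classical explicit route, and it differs from the paper's proof, which instead cites King for the existence and uniqueness (up to scale) of an invariant measure for a single $\mathcal{C}_{\lambda_0}$ admitting a closed trajectory, propagates that measure to all $\lambda$ with periodic orbits via the Darboux grid theorem combined with the uniqueness, and finishes by density of the rationals. However, your writeup has a genuine gap at exactly the point the proposition is really about: the statement asserts \emph{one} metric $\mu$ working simultaneously for every $\lambda<\alpha_0$, whereas your parameter $\varphi$ is built from the curve $\Gamma:w^2=(a-z)(b-z)(\alpha_0-z)(\lambda-z)$ attached to the pair $\{\mathcal{C}_{\alpha_0},\mathcal{C}_\lambda\}$, which depends on $\lambda$. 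You never show that the resulting measures $d\varphi/L$ coincide as $\lambda$ varies, and this is not automatic; it is precisely the content of the Darboux grid theorem that the paper invokes.

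Moreover, the specific differential you propose appears to be the wrong one. Parametrizing the tangency point on $\mathcal{C}_{\alpha_0}$ by its free Jacobi coordinate $u$, the density coming from your quartic acquires a factor $(u-\lambda)^{-1/2}$ and is therefore genuinely $\lambda$-dependent; since $T_\lambda$ is uniquely ergodic whenever $\rho(\lambda)$ is irrational, at most one such density can be invariant for a given $\lambda$, so these measures cannot all serve as the common $\mu$. The correct linearizing differential lives on the $\lambda$-independent cubic $s^2=(a-t)(b-t)(\alpha_0-t)$ --- the same curve that appears in Theorem~\ref{th:cayley} --- with $\lambda$ entering only through the translation amount (the point $P_\lambda$). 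Replacing your quartic by this cubic, or alternatively proving that the maps $T_\lambda$ pairwise commute (Darboux grids) and combining this with unique ergodicity to force all their invariant measures to agree, would close the gap. You should also make precise what ``restricted to the sheet $z=\alpha_0$'' means: the fiber of $\Gamma$ over $z=\alpha_0$ is a single branch point and does not parametrize the caustic.
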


\begin{proof}
Take $\lambda_0$ such that there is a closed billiard trajectory in 
$\mathcal{C}_{\lambda_0}$ with caustic $\mathcal{C}_{\alpha_0}$.
By \cite{King1994}, there is a metric $\mu$ satisfying the requested properties for 
$\lambda=\lambda_0$ -- moreover, such a metric is unique up to multiplication by a constant.
By this uniqueness property and Darboux theorem on grids \cite{DarbouxSUR} (see also 
\cite{DragRadn2006,DragRadn2008,DragRadn2011book}), it follows that metric $\mu$ satisfies the properties for each $\mathcal{C}_{\lambda}$ having closed billiard trajectories with caustic $\mathcal{C}_{\alpha_0}$.

For a periodic trajectory which becomes closed after $n$ bounces on 
$\mathcal{C}_{\lambda}$ and $m$ windings about $\mathcal{C}_{\lambda_0}$,
$\rho(\lambda)=\dfrac{m}{n}$.
Since rational numbers are dense in the reals, $\mu$ will have the required properties for all $\lambda<\alpha_0$.
\end{proof}

\begin{remark}
The function $\rho$ from Proposition \ref{prop:rotation} is called \emph{the rotation function} and its values \emph{the rotation numbers}.
Note that $\rho$ is a continously strictly decreasing function with
$\left(0,\dfrac12\right)$ as image:
$$
\lim_{\lambda\to-\infty}\rho(\lambda)=\dfrac12,
\quad
\lim_{\lambda\to\alpha_0}\rho(\lambda)=0.
$$
\end{remark}

\subsection{Elliptical billiard as a Hamiltonian system}

The standard Poisson bracket for the billiard system is defined as:
$$
\{f,g\}=\frac{\partial f}{\partial x}\frac{\partial g}{\partial\dot{x}}-
\frac{\partial f}{\partial\dot{x}}\frac{\partial g}{\partial{x}}+
\frac{\partial f}{\partial y}\frac{\partial g}{\partial\dot{y}}-
\frac{\partial f}{\partial\dot{y}}\frac{\partial g}{\partial{y}}.
$$

Define the following functions:
$$
K_{\lambda}(x,y,\dot{x},\dot{y})=\frac{\dot{x}^2}{a-\lambda}+
\frac{\dot{y}^2}{b-\lambda}-\frac{(\dot{x}y-\dot{y}x)^2}{(a-\lambda)(b-\lambda)}.
$$

\begin{proposition}
Each two functions $K_{\lambda}$ commute:
$$
\{K_{\lambda_1},K_{\lambda_2}\}=0
$$
and for $\lambda_1\neq\lambda_2$, they are functionally independent.
\end{proposition}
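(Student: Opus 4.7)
The plan is to exploit the partial-fraction identity
\[
\frac{1}{(a-\lambda)(b-\lambda)} = \frac{1}{a-b}\Big(\frac{1}{b-\lambda}-\frac{1}{a-\lambda}\Big),
\]
which lets me rewrite $K_\lambda = F/(a-\lambda) + G/(b-\lambda)$, where, setting $L := \dot{x}y - \dot{y}x$,
\[
F = \dot{x}^2 + \frac{L^2}{a-b}, \qquad G = \dot{y}^2 - \frac{L^2}{a-b},
\]
are both \emph{independent of} $\lambda$. Note that $F+G$ is the kinetic energy and $L$ is (up to sign) the angular momentum, so $F$ and $G$ are essentially the familiar Joachimsthal-type integrals.

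Next I would expand $\{K_{\lambda_1},K_{\lambda_2}\}$ by bilinearity of the bracket. The diagonal terms $\{F,F\}$ and $\{G,G\}$ vanish automatically, the two off-diagonal terms combine via $\{G,F\}=-\{F,G\}$, and a one-line simplification of the resulting combination of reciprocals collapses everything to
\[
\{K_{\lambda_1},K_{\lambda_2}\} \;=\; \frac{(a-b)(\lambda_2-\lambda_1)}{(a-\lambda_1)(b-\lambda_1)(a-\lambda_2)(b-\lambda_2)}\,\{F,G\}.
\]
Commutativity therefore reduces to $\{F,G\}=0$, which I would verify by directly computing $\{\dot{x}^2,L\}$ and $\{L,\dot{y}^2\}$ from the definition of the bracket, using the trivial derivatives $L_x=-\dot{y}$, $L_{\dot{x}}=y$, $L_y=\dot{x}$, $L_{\dot{y}}=-x$. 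Both brackets come out to be $2\dot{x}\dot{y}$, hence $\{\dot{x}^2,L^2\}=\{L^2,\dot{y}^2\}=4L\dot{x}\dot{y}$, and the two contributions to $\{F,G\}$ (which carry opposite signs of $1/(a-b)$) cancel exactly.

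For functional independence, observe that the pair $(K_{\lambda_1},K_{\lambda_2})$ depends linearly on $(F,G)$ through the $2\times 2$ matrix whose rows are $(1/(a-\lambda_i),1/(b-\lambda_i))$; its determinant equals $(a-b)(\lambda_1-\lambda_2)$ divided by a nowhere-vanishing product, and hence is nonzero whenever $\lambda_1\neq\lambda_2$. It therefore suffices to check that $F$ and $G$ are themselves functionally independent, and this can be read off from a single generic point: for instance at $(x,y,\dot{x},\dot{y})=(1,0,0,1)$ the extra $\dot{y}^2$-term in $G$ contributes a direction to $dG$ that is not present in $dF$, so $dF\wedge dG\neq 0$ there and hence on an open dense set.

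The only genuinely nontrivial step is the explicit verification $\{F,G\}=0$; the sign bookkeeping is short once one notices the symmetry between $(\dot{x}^2,L)$ and $(L,\dot{y}^2)$ under the Poisson action of the angular momentum, and everything else is algebraic manipulation in the parameter $\lambda$.
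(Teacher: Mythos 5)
Your argument is correct and complete. The paper itself states this proposition without proof, so there is nothing to compare against; your route through the $\lambda$-independent combinations $F=\dot{x}^2+L^2/(a-b)$ and $G=\dot{y}^2-L^2/(a-b)$ (the standard Joachimsthal/Uhlenbeck-type integrals, with $F+G$ the energy) is the natural one: the partial-fraction decomposition, the coefficient $(a-b)(\lambda_2-\lambda_1)/\prod_i(a-\lambda_i)(b-\lambda_i)$ in front of $\{F,G\}$, the cancellation $\{\dot{x}^2,L^2\}=\{L^2,\dot{y}^2\}=4L\dot{x}\dot{y}$, and the reduction of functional independence to the nonvanishing of the $2\times2$ determinant together with $dF\wedge dG\neq0$ at a single point of the (polynomial, hence generic) phase space all check out.
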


It is straightforward to prove the following
\begin{proposition}
Along a billiard trajectory within any conic $\mathcal{C}_{\lambda_0}$, with caustic 
$\mathcal{C}_{\alpha_0}$ and the speed of the billiard particle being equal to $s$, the value of each function $K_{\lambda}$ is constant and equal to
$$
K_{\lambda}=\frac{\alpha_0-\lambda}{(a-\lambda)(b-\lambda)}\cdot s^2.
$$
\end{proposition}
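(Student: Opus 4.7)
The plan is to prove the statement in two stages: first showing that $K_\lambda$ is preserved on each rectilinear segment of the trajectory, and then computing its common value directly from the tangency condition with the caustic, thereby establishing both the constancy and the precise formula in one stroke.

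For the first stage, on any straight segment $(x(t),y(t))=(x_0+\dot x_0 t,\, y_0+\dot y_0 t)$ with constant velocity, I would observe that $\dot x, \dot y$ and the angular momentum $\dot x y-\dot y x$ are all constants of the free motion: expanding,
$$\dot x_0(y_0+\dot y_0 t)-\dot y_0(x_0+\dot x_0 t)=\dot x_0 y_0-\dot y_0 x_0.$$
Hence each summand in $K_\lambda$ is constant on a segment.

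For the second stage I would use that, by Chasles' theorem, every segment is tangent to $\mathcal{C}_{\alpha_0}$. Parametrizing a non-vertical tangent as $y=kx+c$, the tangency condition to $\mathcal{C}_{\alpha_0}$ reads $c^2=(a-\alpha_0)k^2+(b-\alpha_0)$. Writing the unit direction as $(\dot x,\dot y)=\frac{s}{\sqrt{1+k^2}}(1,k)$ gives $\dot x y-\dot y x=\frac{sc}{\sqrt{1+k^2}}$, so
$$K_\lambda=\frac{s^2}{(1+k^2)(a-\lambda)(b-\lambda)}\Bigl[(b-\lambda)+k^2(a-\lambda)-c^2\Bigr].$$
Substituting $c^2=(a-\alpha_0)k^2+(b-\alpha_0)$ collapses the bracket to $(1+k^2)(\alpha_0-\lambda)$, yielding
$$K_\lambda=\frac{\alpha_0-\lambda}{(a-\lambda)(b-\lambda)}\,s^2.$$
The vertical-tangent case is handled in the limit $k\to\infty$ or by a direct analogous computation.

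Since the resulting expression depends only on $\alpha_0$, $\lambda$ and $s$ — all three invariants of the whole trajectory ($\alpha_0$ by Chasles, $s$ by elastic reflection) — the same value is attained on every segment. This simultaneously shows that $K_\lambda$ is preserved across every reflection point on $\mathcal{C}_{\lambda_0}$ and identifies its constant value, completing the proof. The only mildly subtle step is the algebraic collapse in the bracket: it is here that the tangency condition forces the numerator to factor and cancel the $1+k^2$, which is precisely the algebraic manifestation of Chasles' theorem in Hamiltonian form.
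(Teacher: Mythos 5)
Your proof is correct. The paper offers no argument at all for this proposition --- it is introduced with ``It is straightforward to prove the following'' --- and your two-stage computation (constancy of $\dot x$, $\dot y$, and the angular momentum $\dot x y-\dot y x$ along each free segment, followed by evaluation of $K_\lambda$ via the tangency condition $c^2=(a-\alpha_0)k^2+(b-\alpha_0)$ and the collapse of the bracket to $(1+k^2)(\alpha_0-\lambda)$) is exactly the natural verification being left to the reader; the closing observation that the resulting value depends only on the trajectory invariants $\alpha_0$ and $s$ cleanly handles invariance across reflections. The only cosmetic quibble is that $\frac{s}{\sqrt{1+k^2}}(1,k)$ is the velocity of speed $s$, not a unit direction.
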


\begin{corollary}
Each $K_{\lambda}$ is integral for the billiard motion in any domain with border composed of a few arcs of confocal conics.
\end{corollary}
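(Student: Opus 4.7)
The plan is to reduce the statement to the preceding proposition together with Chasles' theorem. The corollary concerns a domain whose boundary is a finite union of arcs of conics belonging to the confocal family $\{\mathcal{C}_\lambda\}$, and the trajectory of the billiard particle alternates between free motion and reflections on such arcs. It suffices to verify that $K_\lambda$ is preserved along each free segment and across each reflection.

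For free motion, the particle moves along a straight line at constant velocity $(\dot x,\dot y)$. The quantity $\dot x\, y-\dot y\, x$ is the angular momentum with respect to the origin, and is constant along any straight-line motion at constant velocity, as is immediately seen by differentiating with respect to time and substituting $\ddot x=\ddot y=0$. Consequently each of the three summands in
$$
K_\lambda=\frac{\dot x^2}{a-\lambda}+\frac{\dot y^2}{b-\lambda}-\frac{(\dot x\, y-\dot y\, x)^2}{(a-\lambda)(b-\lambda)}
$$
is constant between consecutive bounces, and therefore $K_\lambda$ is preserved during free motion.

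At a reflection off an arc of a boundary conic $\mathcal{C}_\mu$ in the confocal family, two observations are combined. First, by Chasles' theorem, both the incoming and outgoing segments of the trajectory are tangent to the same confocal conic $\mathcal{C}_{\alpha_0}$; that is, the caustic is unchanged by the bounce. Second, the speed $s$ is preserved because the reflection is absolutely elastic. The previous proposition, applied to the incoming and outgoing segments as (pieces of) trajectories within $\mathcal{C}_\mu$ with caustic $\mathcal{C}_{\alpha_0}$, yields that both before and after the bounce
$$
K_\lambda=\frac{\alpha_0-\lambda}{(a-\lambda)(b-\lambda)}\,s^2,
$$
so $K_\lambda$ has the same value on the two sides of the reflection point.

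Combining the two observations, $K_\lambda$ is continuous and locally constant in time along the trajectory, hence constant, which is exactly the assertion that $K_\lambda$ is an integral of the motion in the given domain. The argument requires essentially no computation beyond what is already contained in the preceding proposition; the only subtle point worth flagging is that the domain may contain convex right-angle corners where the boundary is non-smooth, but since the paper models the reflection there as two consecutive bounces on adjacent confocal arcs, the preservation of $K_\lambda$ at such a corner follows from the smooth-bounce case applied twice.
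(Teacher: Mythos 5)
Your argument is correct and is precisely the reasoning the paper leaves implicit when it states this as a corollary: $K_\lambda$ is constant along each straight segment, and across a bounce its value $\frac{\alpha_0-\lambda}{(a-\lambda)(b-\lambda)}s^2$ from the preceding proposition is unchanged because Chasles' theorem preserves the caustic and elastic reflection preserves the speed. No discrepancy with the paper's approach.
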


\section{Billiards in domains bounded by a few confocal conics}\label{sec:uvod-nosonja}
As we have already said, the aim of this paper is to analyze billiard dynamics in a domain bounded by arcs of a few confocal conics.
In order to describe some phenomena appearing in such systems, let us consider the domain $\mathcal{D}_0$ bounded by two confocal ellipses from family 
(\ref{eq:confocal}) and two segments placed on the smaller axis of theirs, as shown in Figure \ref{fig:billiard1}.
\begin{figure}[h]
\begin{pspicture}(-3.5,-2.5)(3.5,2.5)
  \SpecialCoor

\psellipticarc[linecolor=black, linewidth=1pt]
    (0,0)(!8 sqrt 5 sqrt){-90}{90}

\psellipticarc[linecolor=black, linewidth=1pt]
    (0,0)(!5 sqrt 2 sqrt){90}{270}

\psline[linewidth=1pt](!0 2 sqrt)(!0 5 sqrt)
\psline[linewidth=1pt](!0 2 sqrt neg)(!0 5 sqrt neg)

 \pscircle[linestyle=none, fillstyle=solid, fillcolor=blue]
    (!3 sqrt 0){0.1}
 \pscircle[linestyle=none, fillstyle=solid, fillcolor=blue]
    (!3 sqrt neg 0){0.1}

\rput(2.7,1.5){$\Gamma_1$}
\rput(-2.3,0.8){$\Gamma_2$}
\rput(-0.3,1.8){$\Gamma_3$}
\rput(-0.3,-1.8){$\Gamma_4$}

\end{pspicture}
\caption{Domain bounded by two confocal ellipses and two
segments on the $y$-axis.}\label{fig:billiard1}
\end{figure}
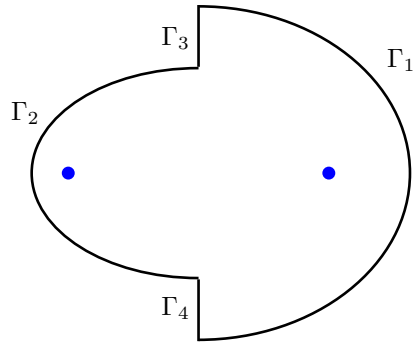
More precisely, we fix parameters $\beta_1$, $\beta_2$ such that
$\beta_1<\beta_2<b$,
and take the border of $\mathcal{D}_0$ to be:
\begin{gather*}
\partial\mathcal{D}_0=\Gamma_1\cup\Gamma_2\cup\Gamma_3\cup\Gamma_4,
\\
\Gamma_1=\{(x,y)\in\mathcal{C}_{\beta_1}\mid x\ge0\},\\
\Gamma_2=\{(x,y)\in\mathcal{C}_{\beta_2}\mid x\le0\},\\
\Gamma_3=\{(0,y)\mid\sqrt{b-\beta_2}\le y\le\sqrt{b-\beta_1}\},\\
\Gamma_4=\{(0,y)\mid-\sqrt{b-\beta_1}\le y\le-\sqrt{b-\beta_2}\}.
\end{gather*}
Notice that segments $\Gamma_3$, $\Gamma_4$ are lying on the the degenerate conic 
$\mathcal{C}_a$ of family (\ref{eq:confocal}).

By Chasles' theorem \cite{Chasles}, each line in the plane is touching exactly one conic from a given confocal family -- moreover, this conic remains the same after the reflection on any conic from the family.
Thus, each billiard trajectory in a domain bounded by arcs of several confocal conics has a caustic from the confocal family.

Consider billiard trajectories within domain $\mathcal{D}_0$ whose caustic is an ellipse 
$\mathcal{C}_{\alpha_0}$
completely placed inside the billiard table, i.e.~ $\beta_2<\alpha_0<b$.
An example of such a trajectory is shown in Figure \ref{fig:caustic1}.
\begin{figure}[h]
\begin{pspicture}(-3,-2.5)(3,2.5)
  \SpecialCoor
%  \psellipse(!8 sqrt 5 sqrt)
%  \psellipse(!8 3 sub sqrt 5 3 sub sqrt)
%  \psellipse(!8 4 sub sqrt 5 4 sub sqrt)

\psline[linewidth=1pt, linecolor=red]
      (!8 sqrt 19 neg cos mul 5 sqrt 19 neg sin mul)
      (!8 sqrt 79 cos mul 5 sqrt 79 sin mul)
      (0,1.8)
      (!8 sqrt 6.5 neg cos mul 5 sqrt 6.5 neg sin mul)
      (!5 sqrt 260 cos mul 2 sqrt 260 sin mul)
      (!5 sqrt 189 cos mul 2 sqrt 189 sin mul)

\psellipticarc[linecolor=black, linewidth=1pt]
    (0,0)(!8 sqrt 5 sqrt){-90}{90}

\psellipticarc[linecolor=black, linewidth=1pt]
    (0,0)(!5 sqrt 2 sqrt){90}{270}

\psellipse[linecolor=black, linewidth=0.8pt, linestyle=dashed]%
    (0,0)(!4 sqrt 1)

\psline[linewidth=1pt](!0 2 sqrt)(!0 5 sqrt)
\psline[linewidth=1pt](!0 2 sqrt neg)(!0 5 sqrt neg)

 \pscircle[linestyle=none, fillstyle=solid, fillcolor=blue]
    (!3 sqrt 0){0.1}
 \pscircle[linestyle=none, fillstyle=solid, fillcolor=blue]
    (!3 sqrt neg 0){0.1}
\end{pspicture}

%\parametricplot[linecolor=red, linewidth=1pt, plotpoints=10000]{0}{360}
%   {
%    /f 30 def
%    /a 6.5 sqrt def
%    /b 2.3 sqrt def
%    a t cos f cos mul mul  b t sin f sin mul mul  add
%    a t cos f sin mul mul  b t sin f cos mul mul  sub
%   }
\caption{A billiard trajectory in $\mathcal{D}_0$ with an ellipse as caustic.}\label{fig:caustic1}
\end{figure}
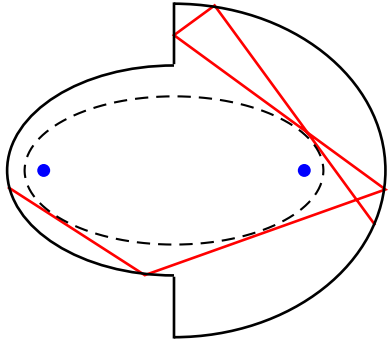

Such billiard trajectories fill out the ring $\mathcal R$ placed
between the billiard border and the caustic, see Figure
\ref{fig:ring1}.

\begin{figure}[h]
\begin{pspicture}(-3,-2.5)(4,2.5)
  \SpecialCoor
%  \psellipse(!8 sqrt 5 sqrt)
%  \psellipse(!8 3 sub sqrt 5 3 sub sqrt)
%  \psellipse(!8 4 sub sqrt 5 4 sub sqrt)

\psellipticarc[linecolor=black, linewidth=1pt, fillstyle=solid, fillcolor=blue!50]%
    (0,0)(!8 sqrt 5 sqrt){-90}{90}

\psellipse[linestyle=none, fillstyle=solid, fillcolor=blue!50]%
    (0,0)(!5 sqrt 2 sqrt)

\psellipticarc[linecolor=black, linewidth=1pt]%
    (0,0)(!5 sqrt 2 sqrt){90}{270}

\psellipse[linecolor=black, linewidth=1pt, fillstyle=solid, fillcolor=white]%
    (0,0)(2, 1)

\psline[linewidth=1pt](!0 2 sqrt)(!0 5 sqrt)
\psline[linewidth=1pt](!0 2 sqrt neg)(!0 5 sqrt neg)

\rput(2.7,1.5){$\Gamma_1$}
\rput(-2.3,0.8){$\Gamma_2$}
\rput(-0.3,1.8){$\Gamma_3$}
\rput(-0.3,-1.8){$\Gamma_4$}

\rput(1.7, 0){$\mathcal{C}$}%

\end{pspicture}

%\parametricplot[linecolor=red, linewidth=1pt, plotpoints=10000]{0}{360}
%   {
%    /f 30 def
%    /a 6.5 sqrt def
%    /b 2.3 sqrt def
%    a t cos f cos mul mul  b t sin f sin mul mul  add
%    a t cos f sin mul mul  b t sin f cos mul mul  sub
%   }
\caption{Ring $\mathcal R$.}\label{fig:ring1}
\end{figure}
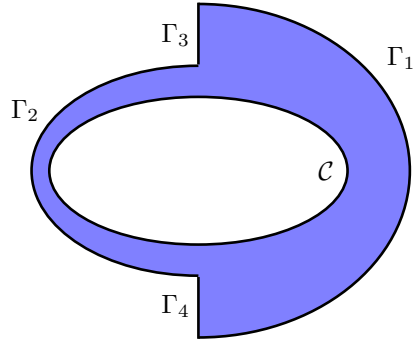

Let us examine the leaf of the phase space composed by these
trajectories.
This leaf is naturally decomposed into four rings
equal to $\mathcal R$, which are glued with each other along the
border segments. Let us describe this in detail:

\begin{itemize}
\item[$\mathcal R_1$]
This ring contains the points in the phase space that correspond
the billiard particle moving away from the caustic and the
clockwise direction around the ellipses center.

\item[$\mathcal R_2$]
Corresponds to the motion away from the caustic in the
counterclockwise direction.

\item[$\mathcal R_3$]
Corresponds to the motion towards the caustic in the
counterclockwise direction.

\item[$\mathcal R_4$]
Corresponds to the motion towards the caustic in the clockwise
direction.
\end{itemize}

Let us notice that the reflection off the two ellipse arcs
contained in the billiard boundary changes the direction of the
particle motion with respect to the caustic, but preserves the
direction of the motion around the foci. The same holds for
passing though tangency points with the caustic. On the other
hand, reflection on the axis changes the direction of motion
around the foci, but preserves the direction with respect to the
caustic. Thus, the four rings are connected to each other
according to the following scheme:
$$
 \xymatrix{
 &
\mathcal{R}_2 \ar@{-}[ld]_{\Gamma_3\Gamma_4} \ar@{-}[rd]^{\Gamma_1\Gamma_2\mathcal{C}}
 &
 \\
\mathcal{R}_1\ar@{-}[rd]_{\Gamma_1\Gamma_2\mathcal{C}}
 &
 &
\mathcal{R}_3
 \\
 &
\mathcal{R}_4 \ar@{-}[ru]_{\Gamma_3\Gamma_4} 
}
$$

Let us represent all the rings in Figures
\ref{fig:ring2} and \ref{fig:ring3}.

\begin{figure}[h]
\begin{pspicture}(-1,-0.5)(12,2.75)

\psset{fillstyle=solid, fillcolor=yellow!30, linewidth=0pt, linecolor=white}
\pspolygon(0,0)(2,0)(2,2)(0,2)%
\pspolygon(3,0)(5,0)(5,2)(3,2)%
\pspolygon(6,0)(8,0)(8,2)(6,2)%
\pspolygon(9,0)(11,0)(11,2)(9,2)

\psset{linewidth=1.3pt, ArrowInside=->, arrowscale=1.4}%
\psset{fillcolor=white, linestyle=solid}%

\psset{linecolor=red}%
\psline(2,0)(2,2)%
\psline[linestyle=dashed](5,0)(5,2)%
\psline[linestyle=dashed](8,0)(8,2)%
\psline(11,0)(11,2)%

\psset{linecolor=blue}
\psline(0,2)(0,0)%
\psline[linestyle=dashed](3,2)(3,0)%
\psline[linestyle=dashed](6,2)(6,0)%
\psline(9,2)(9,0)%

\psset{linecolor=green}
\psline(0,0)(2,0)%
\psline(3,0)(5,0)
\psline[linestyle=dashed](6,0)(8,0)%
\psline[linestyle=dashed](9,0)(11,0)

\psset{linecolor=Magenta}
\psline(2,2)(0,2)%
\psline(5,2)(3,2)%
\psline[linestyle=dashed](8,2)(6,2)%
\psline[linestyle=dashed](11,2)(9,2)%

\psset{linecolor=Orange}
\pscircle(1,1){0.5}%
\pscircle(10,1){0.5}

\pscircle[linestyle=dashed](4,1){0.5}%
\pscircle[linestyle=dashed](7,1){0.5}

\psline[linestyle=none,ArrowInside=->,ArrowInsidePos=1]
         (3,0.53)(4.15,0.5)
\psline[linestyle=none,ArrowInside=->,ArrowInsidePos=1]
         (6,0.53)(7.15,0.5)

\psline[linestyle=none,ArrowInside=->,ArrowInsidePos=1]
         (0,0.53)(1.15,0.5)
\psline[linestyle=none,ArrowInside=->,ArrowInsidePos=1]
         (9,0.53)(10.15,0.5)

\rput(2.30,1.5){$\Gamma_1$}%2.22
\rput(5.31,1.5){$\Gamma_1'$}%
\rput(8.31,1.5){$\Gamma_1'$}%
\rput(11.30,1.5){$\Gamma_1$}%

\rput(1.6,2.25){$\Gamma_3$}%
\rput(4.6,2.25){$\Gamma_3$}%
\rput(7.6,2.25){$\Gamma_3'$}%
\rput(10.6,2.25){$\Gamma_3'$}%

\rput(-0.3,0.5){$\Gamma_2$}%
\rput(2.7,0.5){$\Gamma_2'$}%
\rput(5.7,0.5){$\Gamma_2'$}%
\rput(8.7,0.5){$\Gamma_2$}%

\rput(0.4,-0.3){$\Gamma_4$}%
\rput(3.4,-0.3){$\Gamma_4$}%
\rput(6.4,-0.3){$\Gamma_4'$}%
\rput(9.4,-0.3){$\Gamma_4'$}%

\rput(1.25,1){$\mathcal{C}$}%
\rput(4.25,1){$\mathcal{C}'$}%
\rput(7.25,1){$\mathcal{C}'$}%
\rput(10.25,1){$\mathcal{C}$}%

\rput(0,2.5){$\mathcal R_1:$}%
\rput(3,2.5){$\mathcal R_2:$}%
\rput(6,2.5){$\mathcal R_3:$}%
\rput(9,2.5){$\mathcal R_4:$}%
\end{pspicture}

%\parametricplot[linecolor=red, linewidth=1pt, plotpoints=10000]{0}{360}
%   {
%    /f 30 def
%    /a 6.5 sqrt def
%    /b 2.3 sqrt def
%    a t cos f cos mul mul  b t sin f sin mul mul  add
%    a t cos f sin mul mul  b t sin f cos mul mul  sub
%   }

%arrowsize=2pt3,arrowlength=2,arrowinset=0.4,
\caption{Rings $\mathcal R_1$, $\mathcal R_2$, $\mathcal R_3$,
$\mathcal R_4$.}\label{fig:ring2}
\end{figure}

\begin{figure}[h]
\begin{pspicture}(-1,-0.1)(5,4.1)

\psset{fillstyle=solid, fillcolor=yellow!30, linewidth=0pt, linecolor=yellow!30}
\pspolygon(0,0)(4,0)(4,4)(0,4)%

\psset{linewidth=1.3pt, ArrowInside=->, arrowscale=1.4}%
\psset{fillcolor=white, linestyle=solid}%

\psset{linecolor=red}%
\psline(0,2)(0,4)%
\psline[linestyle=dashed](0,2)(0,0)%
\psline[linestyle=dashed](4,2)(4,0)%
\psline(4,2)(4,4)%

\psset{linecolor=blue}
\psline(2,4)(2,2)%
\psline[linestyle=dashed](2,0)(2,2)%

\psset{linecolor=green}
\psline(2,2)(4,2)%
\psline[linestyle=dashed](2,2)(0,2)%

\psset{linecolor=Magenta}
\psline(4,0)(2,0)%
\psline(4,4)(2,4)%
\psline[linestyle=dashed](0,0)(2,0)%
\psline[linestyle=dashed](0,4)(2,4)%

\psset{linecolor=Orange}
\pscircle(1,3){0.5}%
\pscircle(3,3){0.5}%
\pscircle[linestyle=dashed](1,1){0.5}%
\pscircle[linestyle=dashed](3,1){0.5}

\psline[linestyle=none,ArrowInside=->,ArrowInsidePos=1]
         (0,0.53)(1.15,0.5)
\psline[linestyle=none,ArrowInside=->,ArrowInsidePos=1]
         (2,2.53)(3.15,2.5)

\psline[linestyle=none,ArrowInside=->,ArrowInsidePos=1]
         (2,2.47)(0.85,2.5)
\psline[linestyle=none,ArrowInside=->,ArrowInsidePos=1]
         (4,0.47)(2.85,0.5)

\end{pspicture}

%\parametricplot[linecolor=red, linewidth=1pt, plotpoints=10000]{0}{360}
%   {
%    /f 30 def
%    /a 6.5 sqrt def
%    /b 2.3 sqrt def
%    a t cos f cos mul mul  b t sin f sin mul mul  add
%    a t cos f sin mul mul  b t sin f cos mul mul  sub
%   }

%arrowsize=2pt3,arrowlength=2,arrowinset=0.4,
\caption{Gluing rings $\mathcal R_1$, $\mathcal R_2$, $\mathcal
R_3$, $\mathcal R_4$.}\label{fig:ring3}
\end{figure}
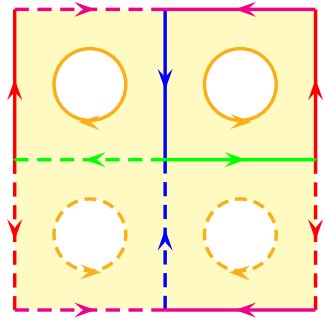

Now, we have the following

\begin{proposition}\label{prop:genus3}
All billiard trajectories within domain $\mathcal{D}_0$ with a fixed elliptical caustic form an orientable surface of genus $3$.
\end{proposition}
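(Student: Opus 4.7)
My plan is to compute $\chi(\Sigma)$ for the invariant leaf $\Sigma$ directly from the decomposition $\Sigma=\mathcal{R}_1\cup\mathcal{R}_2\cup\mathcal{R}_3\cup\mathcal{R}_4$ and the gluing scheme described above, and then to extract the genus from $\chi=2-2g$ after verifying that $\Sigma$ is a closed oriented surface.

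I would first equip each ring $\mathcal{R}_i$ with the following CW structure: as $0$-cells, the four outer corners (where consecutive arcs $\Gamma_j$ meet) and one auxiliary vertex $v_0$ on the caustic $\mathcal{C}$, five in total; as $1$-cells, the four outer arcs $\Gamma_1,\Gamma_2,\Gamma_3,\Gamma_4$, the caustic loop $\mathcal{C}$, and one internal arc connecting a chosen outer corner to $v_0$, six in total; and as the unique $2$-cell, the annulus cut open along the internal arc. This recovers $\chi(\mathcal{R}_i)=5-6+1=0$ for each ring.

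I would then implement the identifications dictated by the gluing scheme and count cells in $\Sigma$. For each $j\in\{1,2,3,4\}$, the four copies of $\Gamma_j$ coalesce in two pairs and thus contribute two edges to $\Sigma$; the four copies of the caustic $\mathcal{C}$ coalesce in the two pairs $\mathcal{R}_1\leftrightarrow\mathcal{R}_4$ and $\mathcal{R}_2\leftrightarrow\mathcal{R}_3$ and produce two loop-edges based at two distinct vertices of $\Sigma$; the four internal arcs survive untouched. The crucial combinatorial observation is that each outer corner lies at the endpoint of two distinct arcs $\Gamma_j$ and $\Gamma_k$, and the pair of involutions induced on its four copies by these arc-identifications generates a Klein four-group acting transitively, so that all four copies of each corner collapse to a single vertex of $\Sigma$. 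Collecting the counts,
\[
V=4+2=6,\qquad E=5\cdot 2+4=14,\qquad F=4,\qquad \chi(\Sigma)=-4.
\]

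Closedness of $\Sigma$ is immediate from the fact that every boundary arc in the disjoint union $\mathcal{R}_1\sqcup\mathcal{R}_2\sqcup\mathcal{R}_3\sqcup\mathcal{R}_4$ is paired in the scheme. Orientability follows by orienting each $\mathcal{R}_i$ by the standard planar orientation of the underlying ring $\mathcal{R}$ and noting that every prescribed identification — induced either by a Euclidean reflection on an arc $\Gamma_j$ or by tangency with the caustic $\mathcal{C}$ — reverses the boundary orientation on the two pieces being glued, which is precisely the condition for the resulting surface to carry a coherent orientation. The relation $\chi=2-2g=-4$ then gives $g=3$. The main difficulty in this argument is the combinatorial bookkeeping at the outer corners, where the four copies must be correctly tracked through the composed arc-identifications; any mistake in the Klein-four-group orbit structure would change $V$ and spoil the final Euler characteristic.
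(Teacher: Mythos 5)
Your computation is correct and follows essentially the same route as the paper, which states the proposition without a separate proof and lets it follow from the four-ring decomposition and the gluing scheme pictured in Figures~\ref{fig:ring2} and~\ref{fig:ring3}; you have simply made the Euler-characteristic count explicit, and your observation that the four copies of each corner collapse to a single vertex (the two edge-pairings generating a Klein four-group acting transitively) is exactly the bookkeeping the paper leaves to the reader, consistent with $\chi=-4$ and with the later formula $g=\tilde k+1$ for $\tilde k=2$ reflex angles. One small inaccuracy in your orientability argument: if all four rings carry the standard planar orientation and the gluings are the identity on the underlying arcs, the identifications \emph{preserve} the induced boundary orientations rather than reverse them; orientability still holds because the gluing graph $\mathcal{R}_1-\mathcal{R}_2-\mathcal{R}_3-\mathcal{R}_4-\mathcal{R}_1$ is an even cycle, so after reversing the orientation on $\mathcal{R}_2$ and $\mathcal{R}_4$ every identification becomes orientation-reversing on the boundary and the orientations patch coherently.
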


\section{Examples}\label{sec:examples}
In this section, we are going to analyze examples of billiards in domains described in Section \ref{sec:uvod-nosonja}, in a special case when confocal family (\ref{eq:confocal}) is degenerate: $a=b$.
The family then consists of concentric circles.
It is convenient to analyze such cases, since they may be approached with elementary means, yet all phenomena appearing in non-degenerate cases are to be preserved, as will be shown in Section \ref{sec:interval}.

\subsection{Domain bounded by circles with rotation numbers $\dfrac13$ and $\dfrac14$}
Let us consider example of the billiard within the domain with two concentric half-circles of radii $2R$, $R\sqrt{2}$ and the corresponding segments.
We will consider caustic with radius $R$.

In this case, there exist six trajectories connecting singular points corresponding to reflex angles of the boundary.
Those trajectories are represented in Figure \ref{fig:c34-sadd}.
Each polygonal line shown on the figure corresponds to two trajectories in the phase space, depending on direction of the motion. 
\begin{figure}[h]
\begin{pspicture}(-5.5,-2.1)(5.5,2.1)

  \SpecialCoor

% prva trajektorija

\psline[linewidth=1pt, linecolor=blue]
      (!0 2 sqrt)(1.93185, -0.517638) (-0.707107, -1.22474)
	(-1.22474, 0.707107)(1.41421, 1.41421)(0.517638, -1.93185)(0, -1.41421)

\psellipticarc[linecolor=black, linewidth=1pt]
    (0,0)(2,2){-90}{90}

\psellipticarc[linecolor=black, linewidth=1pt]
    (0,0)(!2 sqrt 2 sqrt){90}{270}

\psellipse[linecolor=black, linewidth=0.8pt, linestyle=dashed]%
    (0,0)(1,1)

\psline[linewidth=1pt](!0 2 sqrt)(0,2)
\psline[linewidth=1pt](!0 2 sqrt neg)(0,-2)

\pscircle*[linecolor=red](!0 2 sqrt){0.06}
\pscircle*[linecolor=red](!0 2 sqrt neg){0.06}

\pscircle*[linecolor=blue](1.93185, -0.517638){0.06}
\pscircle*[linecolor=blue](-0.707107, -1.22474){0.06}
\pscircle*[linecolor=blue](-1.22474, 0.707107){0.06}
\pscircle*[linecolor=blue](1.41421, 1.41421){0.06}
\pscircle*[linecolor=blue](0.517638, -1.93185){0.06}

% druga trajektorija

\psline[linewidth=1pt, linecolor=green]
      (!-4 2 sqrt neg)(!1.93185 4 sub 0.517638) (-4.707107, 1.22474)
	(-5.22474, -0.707107)(!1.41421 4 sub -1.41421)(!0.517638 4 sub 1.93185)	 	(-4, 1.41421)

\psellipticarc[linecolor=black, linewidth=1pt]
    (-4,0)(2,2){-90}{90}

\psellipticarc[linecolor=black, linewidth=1pt]
    (-4,0)(!2 sqrt 2 sqrt){90}{270}

\psellipse[linecolor=black, linewidth=0.8pt, linestyle=dashed]%
    (-4,0)(1,1)

\psline[linewidth=1pt](!-4 2 sqrt)(-4,2)
\psline[linewidth=1pt](!-4 2 sqrt neg)(-4,-2)

\pscircle*[linecolor=red](!-4 2 sqrt){0.06}
\pscircle*[linecolor=red](!-4 2 sqrt neg){0.06}

\pscircle*[linecolor=green](!1.93185 4 sub 0.517638){0.06}
\pscircle*[linecolor=green](-4.707107, 1.22474){0.06}
\pscircle*[linecolor=green](-5.22474,-0.707107){0.06}
\pscircle*[linecolor=green](!1.41421 4 sub -1.41421){0.06}
\pscircle*[linecolor=green](!0.517638 4 sub 1.93185){0.06}

% treca trajektorija

\psline[linewidth=1pt, linecolor=magenta]
      (!4 2 sqrt)(!2 sqrt neg 4 add 0)(4, -1.41421)

\psellipticarc[linecolor=black, linewidth=1pt]
    (4,0)(2,2){-90}{90}

\psellipticarc[linecolor=black, linewidth=1pt]
    (4,0)(!2 sqrt 2 sqrt){90}{270}

\psellipse[linecolor=black, linewidth=0.8pt, linestyle=dashed]%
    (4,0)(1,1)

\psline[linewidth=1pt](!4 2 sqrt)(4,2)
\psline[linewidth=1pt](!4 2 sqrt neg)(4,-2)

\pscircle*[linecolor=red](!4 2 sqrt){0.06}
\pscircle*[linecolor=red](!4 2 sqrt neg){0.06}

\pscircle*[linecolor=magenta](!2 sqrt neg 4 add 0){0.06}

\rput(-5.5,-1.8){a)}
\rput(-1.5,-1.8){b)}
\rput(2.5,-1.8){c)}

\end{pspicture}
\caption{Saddle-connections corresponding to circles with rotation numbers $\dfrac13$ and $\dfrac14$.}\label{fig:c34-sadd}
\end{figure}

Vertices of the saddle-connections divide the billiard border into thirteen parts, see Figure \ref{fig:c34-parts}.
For the definition of saddle-connections -- see Section \ref{sec:intro}: Introduction.
\begin{figure}[h]
\begin{pspicture}(-5.5,-2.1)(5.5,2.1)

  \SpecialCoor

\psset{linewidth=2pt}

\psellipticarc[linecolor=gray]
    (0,0)(2,2){-90}{-75}
\psellipticarc[linecolor=orange]
    (0,0)(2,2){-75}{-45}
\psellipticarc[linecolor=gray]
    (0,0)(2,2){-45}{-15}
\psellipticarc[linecolor=orange]
    (0,0)(2,2){-15}{15}
\psellipticarc[linecolor=gray]
    (0,0)(2,2){15}{45}
\psellipticarc[linecolor=orange]
    (0,0)(2,2){45}{75}
\psellipticarc[linecolor=gray]
    (0,0)(2,2){75}{90}

\psline[linecolor=gray](!0 2 sqrt)(0,2)
\psline[linecolor=gray](!0 2 sqrt neg)(0,-2)

\psellipticarc[linecolor=orange]
    (0,0)(!2 sqrt 2 sqrt){90}{120}
\psellipticarc[linecolor=gray]
    (0,0)(!2 sqrt 2 sqrt){120}{150}
\psellipticarc[linecolor=orange]
    (0,0)(!2 sqrt 2 sqrt){150}{210}
\psellipticarc[linecolor=gray]
    (0,0)(!2 sqrt 2 sqrt){210}{240}
\psellipticarc[linecolor=orange]
    (0,0)(!2 sqrt 2 sqrt){240}{270}

\pscircle*[linecolor=red](!0 2 sqrt){0.06}
\pscircle*[linecolor=red](!0 2 sqrt neg){0.06}

\pscircle*[linecolor=blue](1.93185, -0.517638){0.06}
\pscircle*[linecolor=blue](-0.707107, -1.22474){0.06}
\pscircle*[linecolor=blue](-1.22474, 0.707107){0.06}
\pscircle*[linecolor=blue](1.41421, 1.41421){0.06}
\pscircle*[linecolor=blue](0.517638, -1.93185){0.06}

\pscircle*[linecolor=green](1.93185, 0.517638){0.06}
\pscircle*[linecolor=green](-.707107, 1.22474){0.06}
\pscircle*[linecolor=green](-1.22474,-0.707107){0.06}
\pscircle*[linecolor=green](1.41421, -1.41421){0.06}
\pscircle*[linecolor=green](0.517638, 1.93185){0.06}

\pscircle*[linecolor=magenta](!2 sqrt neg 0){0.06}

\end{pspicture}
\caption{Parts of the boundary corresponding to circles with rotation numbers $\dfrac13$ and $\dfrac14$.}\label{fig:c34-parts}
\end{figure}

All trajectories in this billiard domain corresponding to the fixed caustic are periodic:
\begin{itemize}
\item
either all bouncing points of a given trajectory are in gray parts -- in this case the billiard particle hits twice each gray part until the trajectory becomes closed. Such a trajectory is $12$-periodic, with with four bounces on the smaller circle, six bounces on the bigger one, and one on each of the segments on the $y$-axis (see Figures \ref{fig:c34-periodic}a and \ref{fig:c34-periodic}c);

\item 
or all bouncing points are in orange parts -- the particle will hit each part once until closure and the trajectory is $7$-periodic. Such a trajectory hits three times the bigger circle and four times the smaller one (see Figure \ref{fig:c34-periodic}b).
\end{itemize}
\begin{figure}[h]
\begin{pspicture}(-5.5,-2.1)(5.5,2.1)

  \SpecialCoor

% prva trajektorija

\psline[linewidth=1pt, linecolor=orange]
     (-0.141186, 1.40715)(1.97388, -0.322189)(-0.581304, -1.28922)
(-1.28922, 0.581304)(1.26596, 1.54833)(0.707915, -1.87052)(-1.40715, -0.141186)
(-0.141186, 1.40715)(1.97388, -0.322189)

\psellipticarc[linecolor=black, linewidth=1pt]
    (0,0)(2,2){-90}{90}

\psellipticarc[linecolor=black, linewidth=1pt]
    (0,0)(!2 sqrt 2 sqrt){90}{270}

\psellipse[linecolor=black, linewidth=0.8pt, linestyle=dashed]%
    (0,0)(1,1)

\psline[linewidth=1pt](!0 2 sqrt)(0,2)
\psline[linewidth=1pt](!0 2 sqrt neg)(0,-2)

\pscircle*[linecolor=red](!0 2 sqrt){0.06}
\pscircle*[linecolor=red](!0 2 sqrt neg){0.06}

\pscircle*[linecolor=blue](1.93185, -0.517638){0.06}
\pscircle*[linecolor=blue](-0.707107, -1.22474){0.06}
\pscircle*[linecolor=blue](-1.22474, 0.707107){0.06}
\pscircle*[linecolor=blue](1.41421, 1.41421){0.06}
\pscircle*[linecolor=blue](0.517638, -1.93185){0.06}

\pscircle*[linecolor=green](1.93185, 0.517638){0.06}
\pscircle*[linecolor=green](-.707107, 1.22474){0.06}
\pscircle*[linecolor=green](-1.22474,-0.707107){0.06}
\pscircle*[linecolor=green](1.41421, -1.41421){0.06}
\pscircle*[linecolor=green](0.517638, 1.93185){0.06}

\pscircle*[linecolor=magenta](!2 sqrt neg 0){0.06}

% druga trajektorija

\psline[linewidth=1pt, linecolor=gray]
(-4, -1.58995)(!0.31193 4 sub -1.97553)(!1.55489 4 sub 1.2579)
(-5.14373,0.831798)
(-4.831798, -1.14373)(!1.86682 4 sub -0.717624)(-4, 1.58995)
(!0.31193 4 sub 1.97553)
(!1.55489 4 sub -1.2579)(-5.14373, -0.831798)(-4.831798, 1.14373)
(!1.86682 4 sub 0.717624)(-4, -1.58995)

\psellipticarc[linecolor=black, linewidth=1pt]
    (-4,0)(2,2){-90}{90}

\psellipticarc[linecolor=black, linewidth=1pt]
    (-4,0)(!2 sqrt 2 sqrt){90}{270}

\psellipse[linecolor=black, linewidth=0.8pt, linestyle=dashed]%
    (-4,0)(1,1)

\psline[linewidth=1pt](!-4 2 sqrt)(-4,2)
\psline[linewidth=1pt](!-4 2 sqrt neg)(-4,-2)

\pscircle*[linecolor=red](!-4 2 sqrt){0.06}
\pscircle*[linecolor=red](!-4 2 sqrt neg){0.06}

\pscircle*[linecolor=blue](!1.93185 4 sub -0.517638){0.06}
\pscircle*[linecolor=blue](-4.707107, -1.22474){0.06}
\pscircle*[linecolor=blue](-5.22474, 0.707107){0.06}
\pscircle*[linecolor=blue](!1.41421 4 sub 1.41421){0.06}
\pscircle*[linecolor=blue](!0.517638 4 sub -1.93185){0.06}

\pscircle*[linecolor=green](!1.93185 4 sub 0.517638){0.06}
\pscircle*[linecolor=green](-4.707107, 1.22474){0.06}
\pscircle*[linecolor=green](-5.22474,-0.707107){0.06}
\pscircle*[linecolor=green](!1.41421 4 sub -1.41421){0.06}
\pscircle*[linecolor=green](!0.517638 4 sub 1.93185){0.06}

\pscircle*[linecolor=magenta](!2 sqrt neg 4 sub 0){0.06}

% treca trajektorija

\psline[linewidth=1pt, linecolor=gray]
(4,-2)(5.73205, 1)(3, 1)(3, -1)(5.73205,-1)(4, 2)

\psellipticarc[linecolor=black, linewidth=1pt]
    (4,0)(2,2){-90}{90}

\psellipticarc[linecolor=black, linewidth=1pt]
    (4,0)(!2 sqrt 2 sqrt){90}{270}

\psellipse[linecolor=black, linewidth=0.8pt, linestyle=dashed]%
    (4,0)(1,1)

\psline[linewidth=1pt](!4 2 sqrt)(4,2)
\psline[linewidth=1pt](!4 2 sqrt neg)(4,-2)

\pscircle*[linecolor=red](!4 2 sqrt){0.06}
\pscircle*[linecolor=red](!4 2 sqrt neg){0.06}

\pscircle*[linecolor=blue](5.93185, -0.517638){0.06}
\pscircle*[linecolor=blue](!-0.707107 4 add -1.22474){0.06}
\pscircle*[linecolor=blue](!-1.22474 4 add 0.707107){0.06}
\pscircle*[linecolor=blue](5.41421, 1.41421){0.06}
\pscircle*[linecolor=blue](4.517638, -1.93185){0.06}

\pscircle*[linecolor=green](5.93185, 0.517638){0.06}
\pscircle*[linecolor=green](!-.707107 4 add 1.22474){0.06}
\pscircle*[linecolor=green](!-1.22474 4 add -0.707107){0.06}
\pscircle*[linecolor=green](5.41421, -1.41421){0.06}
\pscircle*[linecolor=green](4.517638, 1.93185){0.06}

\pscircle*[linecolor=magenta](!2 sqrt neg 4 add 0){0.06}

\rput(-5.5,-1.8){a)}
\rput(-1.5,-1.8){b)}
\rput(2.5,-1.8){c)}

\end{pspicture}
\caption{Periodic trajectories corresponding to circles with rotation numbers $\dfrac13$ and $\dfrac14$.}\label{fig:c34-periodic}
\end{figure}

The corresponding level set in the phase space is divided by the saddle-connections into three parts:
\begin{itemize}
\item
the part containing all $12$-periodic trajectories: this part is bounded by four saddle-connections whose projections on the configuration space is shown in Figures \ref{fig:c34-sadd}a and \ref{fig:c34-sadd}b;

\item
two parts containing all $7$-periodic trajectories winding about the caustic in the clockwise and counterclockwise direction: these parts are bounded by saddle-connections winding in the same direction whose projections on the configuration space is shown in Figures \ref{fig:c34-sadd}a and \ref{fig:c34-sadd}b; the saddle-connections corresponding to Figure \ref{fig:c34-sadd}c are lying within the corresponding parts.
\end{itemize}

\subsection{Domain bounded by circles with rotation numbers $\dfrac14$ and $\dfrac16$}
Let us consider example of the billiard within the domain determined with two concentric half-circles with rotation numbers equal to $\dfrac14$ and $\dfrac16$.

In this case, there exist six saddle-connections, represented in Figure 
\ref{fig:c46-sadd}.
Each polygonal line shown on the figure corresponds to two trajectories in the phase space, depending on direction of the motion. 
\begin{figure}[h]
\begin{pspicture}(-5.5,-2.1)(5.5,2.1)

  \SpecialCoor

% prva trajektorija

\psline[linewidth=1pt, linecolor=blue]
      (!0 -1.1547 2 sqrt mul)(!0.366025 2 sqrt mul -1.36603 2 sqrt mul)
	(!1.36603 2 sqrt mul 0.366025 2 sqrt mul)(!0 1.1547 2 sqrt mul)

\psellipticarc[linecolor=black, linewidth=1pt]
    (0,0)(!2 3 div 6 sqrt mul 2 3 div 6 sqrt mul){90}{270}

\psellipticarc[linecolor=black, linewidth=1pt]    (0,0)(2,2){-90}{90}

\psellipse[linecolor=black, linewidth=0.8pt, linestyle=dashed]%
    (0,0)(!2 sqrt 2 sqrt)

\psline[linewidth=1pt](!0 2 3 div 6 sqrt mul)(0,2)
\psline[linewidth=1pt](!0 2 3 div 6 sqrt mul neg)(0,-2)

\pscircle*[linecolor=red](!0 2 3 div 6 sqrt mul){0.06}
\pscircle*[linecolor=red](!0 2 3 div 6 sqrt mul neg){0.06}

\pscircle*[linecolor=blue](!0.366025 2 sqrt mul -1.36603 2 sqrt mul){0.06}
\pscircle*[linecolor=blue](!1.36603 2 sqrt mul 0.366025 2 sqrt mul){0.06}

% druga trajektorija

\psline[linewidth=1pt, linecolor=green]
      (!-4 1.1547 2 sqrt mul)(!0.366025 2 sqrt mul 4 sub 1.36603 2 sqrt mul)
	(!1.36603 2 sqrt mul 4 sub -0.366025 2 sqrt mul)(!-4 -1.1547 2 sqrt mul)

\psellipticarc[linecolor=black, linewidth=1pt]
    (-4,0)(!2 3 div 6 sqrt mul 2 3 div 6 sqrt mul){90}{270}

\psellipticarc[linecolor=black, linewidth=1pt]    (-4,0)(2,2){-90}{90}

\psellipse[linecolor=black, linewidth=0.8pt, linestyle=dashed]%
    (-4,0)(!2 sqrt 2 sqrt)

\psline[linewidth=1pt](!-4 2 3 div 6 sqrt mul)(-4,2)
\psline[linewidth=1pt](!-4 2 3 div 6 sqrt mul neg)(-4,-2)

\pscircle*[linecolor=red](!-4 2 3 div 6 sqrt mul){0.06}
\pscircle*[linecolor=red](!-4 2 3 div 6 sqrt mul neg){0.06}

\pscircle*[linecolor=green](!0.366025 2 sqrt mul 4 sub 1.36603 2 sqrt mul){0.06}
\pscircle*[linecolor=green](!1.36603 2 sqrt mul 4 sub -0.366025 2 sqrt mul){0.06}

% treca trajektorija

\psline[linewidth=1pt, linecolor=magenta]
      (!4 1.1547 2 sqrt mul)(!2 sqrt neg 4 add 0.57735 2 sqrt mul)
     (!2 sqrt neg 4 add -0.57735 2 sqrt mul)(!4 -1.1547 2 sqrt mul)

\psellipticarc[linecolor=black, linewidth=1pt]
    (4,0)(!2 3 div 6 sqrt mul 2 3 div 6 sqrt mul){90}{270}

\psellipticarc[linecolor=black, linewidth=1pt]    (4,0)(2,2){-90}{90}

\psellipse[linecolor=black, linewidth=0.8pt, linestyle=dashed]%
    (4,0)(!2 sqrt 2 sqrt)

\psline[linewidth=1pt](!4 2 3 div 6 sqrt mul)(4,2)
\psline[linewidth=1pt](!4 2 3 div 6 sqrt mul neg)(4,-2)

\pscircle*[linecolor=red](!4 2 3 div 6 sqrt mul){0.06}
\pscircle*[linecolor=red](!4 2 3 div 6 sqrt mul neg){0.06}

\pscircle*[linecolor=magenta](!2 sqrt neg 4 add 0.57735 2 sqrt mul){0.06}
\pscircle*[linecolor=magenta](!2 sqrt neg 4 add -0.57735 2 sqrt mul){0.06}

\rput(-5.5,-1.8){a)}
\rput(-1.5,-1.8){b)}
\rput(2.5,-1.8){c)}

\end{pspicture}
\caption{Saddle-connections corresponding to circles with rotation numbers $\dfrac14$ and $\dfrac16$.}\label{fig:c46-sadd}
\end{figure}

Vertices of the saddle-connections divide the billiard border into eight parts, see Figure \ref{fig:c46-parts}.
\begin{figure}[h]
\begin{pspicture}(-5.5,-2.1)(5.5,2.1)

  \SpecialCoor

\psset{linewidth=2pt}

\psellipticarc[linecolor=orange]
    (0,0)(!2 3 div 6 sqrt mul 2 3 div 6 sqrt mul){90}{270}

\psellipticarc[linecolor=gray](0,0)(2,2){-90}{-75}

\psellipticarc[linecolor=orange](0,0)(2,2){-75}{-15}

\psellipticarc[linecolor=gray](0,0)(2,2){-15}{15}

\psellipticarc[linecolor=orange](0,0)(2,2){15}{75}

\psellipticarc[linecolor=gray](0,0)(2,2){75}{90}

\psline[linecolor=gray](!0 2 3 div 6 sqrt mul)(0,2)
\psline[linecolor=gray](!0 2 3 div 6 sqrt mul neg)(0,-2)

\pscircle*[linecolor=red](!0 2 3 div 6 sqrt mul){0.06}
\pscircle*[linecolor=red](!0 2 3 div 6 sqrt mul neg){0.06}

\pscircle*[linecolor=blue](!0.366025 2 sqrt mul -1.36603 2 sqrt mul){0.06}
\pscircle*[linecolor=blue](!1.36603 2 sqrt mul 0.366025 2 sqrt mul){0.06}

\pscircle*[linecolor=green](!0.366025 2 sqrt mul 1.36603 2 sqrt mul){0.06}
\pscircle*[linecolor=green](!1.36603 2 sqrt mul -0.366025 2 sqrt mul){0.06}

\pscircle*[linecolor=magenta](!2 sqrt neg 0.57735 2 sqrt mul){0.06}
\pscircle*[linecolor=magenta](!2 sqrt neg -0.57735 2 sqrt mul){0.06}

\end{pspicture}
\caption{Parts of the boundary corresponding to circles with rotation numbers $\dfrac14$ and $\dfrac16$.}\label{fig:c46-parts}
\end{figure}

All trajectories in this billiard domain corresponding to the fixed caustic are periodic:
\begin{itemize}
\item
either all bouncing points of a given trajectory are in gray parts -- in this case the billiard particle hits twice each gray part until the trajectory becomes closed and the trajectory is $6$-periodic, see Figures \ref{fig:c46-periodic}a and \ref{fig:c46-periodic}c;

\item 
or all bouncing points are in orange parts -- the particle will hit each part once until closure and the trajectory is $5$-periodic, see Figure \ref{fig:c46-periodic}b.
\end{itemize}
\begin{figure}[h]
\begin{pspicture}(-5.5,-2.1)(5.5,2.1)

  \SpecialCoor

% prva trajektorija

\psline[linecolor=orange, linewidth=1pt]
(!-0.341237 2 sqrt mul 1.10313 2 sqrt mul)(!-1.12596 2 sqrt mul 0.256044 2 sqrt mul)
(!-0.784718 2 sqrt mul -0.847084 2 sqrt mul)(!0.753366 2 sqrt mul -1.19685 2 sqrt mul)
(!1.19685 2 sqrt mul 0.753366 2 sqrt mul)(!-0.341237 2 sqrt mul 1.10313 2 sqrt mul)

\psellipticarc[linecolor=black, linewidth=1pt]
    (0,0)(!2 3 div 6 sqrt mul 2 3 div 6 sqrt mul){90}{270}

\psellipticarc[linecolor=black, linewidth=1pt]    (0,0)(2,2){-90}{90}

\psellipse[linecolor=black, linewidth=0.8pt, linestyle=dashed]%
    (0,0)(!2 sqrt 2 sqrt)

\psline[linewidth=1pt](!0 2 3 div 6 sqrt mul)(0,2)
\psline[linewidth=1pt](!0 2 3 div 6 sqrt mul neg)(0,-2)

\pscircle*[linecolor=red](!0 2 3 div 6 sqrt mul){0.06}
\pscircle*[linecolor=red](!0 2 3 div 6 sqrt mul neg){0.06}

\pscircle*[linecolor=blue](!0.366025 2 sqrt mul -1.36603 2 sqrt mul){0.06}
\pscircle*[linecolor=blue](!1.36603 2 sqrt mul 0.366025 2 sqrt mul){0.06}

\pscircle*[linecolor=green](!0.366025 2 sqrt mul 1.36603 2 sqrt mul){0.06}
\pscircle*[linecolor=green](!1.36603 2 sqrt mul -0.366025 2 sqrt mul){0.06}

\pscircle*[linecolor=magenta](!2 sqrt neg 0.57735 2 sqrt mul){0.06}
\pscircle*[linecolor=magenta](!2 sqrt neg -0.57735 2 sqrt mul){0.06}

% druga trajektorija

\psline[linecolor=gray, linewidth=1pt]
(!1.40715 2 sqrt mul 4 sub 0.141186 2 sqrt mul)(!-4 1.29171 2 sqrt mul)
(!0.141186 2 sqrt mul 4 sub 1.40715 2 sqrt mul)
(!1.40715 2 sqrt mul 4 sub -0.141186 2 sqrt mul)
(!-4 -1.29171 2 sqrt mul)(!0.141186 2 sqrt mul 4 sub -1.40715 2 sqrt mul)
(!1.40715 2 sqrt mul 4 sub 0.141186 2 sqrt mul)

\psellipticarc[linecolor=black, linewidth=1pt]
    (-4,0)(!2 3 div 6 sqrt mul 2 3 div 6 sqrt mul){90}{270}

\psellipticarc[linecolor=black, linewidth=1pt]    (-4,0)(2,2){-90}{90}

\psellipse[linecolor=black, linewidth=0.8pt, linestyle=dashed]%
    (-4,0)(!2 sqrt 2 sqrt)

\psline[linewidth=1pt](!-4 2 3 div 6 sqrt mul)(-4,2)
\psline[linewidth=1pt](!-4 2 3 div 6 sqrt mul neg)(-4,-2)

\pscircle*[linecolor=red](!-4 2 3 div 6 sqrt mul){0.06}
\pscircle*[linecolor=red](!-4 2 3 div 6 sqrt mul neg){0.06}

\pscircle*[linecolor=green](!0.366025 2 sqrt mul 4 sub 1.36603 2 sqrt mul){0.06}
\pscircle*[linecolor=green](!1.36603 2 sqrt mul 4 sub -0.366025 2 sqrt mul){0.06}

\pscircle*[linecolor=blue](!0.366025 2 sqrt mul 4 sub -1.36603 2 sqrt mul){0.06}
\pscircle*[linecolor=blue](!1.36603 2 sqrt mul 4 sub 0.366025 2 sqrt mul){0.06}

\pscircle*[linecolor=magenta](!2 sqrt neg 4 sub 0.57735 2 sqrt mul){0.06}
\pscircle*[linecolor=magenta](!2 sqrt neg 4 sub -0.57735 2 sqrt mul){0.06}

% treca trajektorija
\psline[linecolor=gray, linewidth=1pt]
(4,-2)(6,0)(4,2)

\psellipticarc[linecolor=black, linewidth=1pt]
    (4,0)(!2 3 div 6 sqrt mul 2 3 div 6 sqrt mul){90}{270}

\psellipticarc[linecolor=black, linewidth=1pt]    (4,0)(2,2){-90}{90}

\psellipse[linecolor=black, linewidth=0.8pt, linestyle=dashed]%
    (4,0)(!2 sqrt 2 sqrt)

\psline[linewidth=1pt](!4 2 3 div 6 sqrt mul)(4,2)
\psline[linewidth=1pt](!4 2 3 div 6 sqrt mul neg)(4,-2)

\pscircle*[linecolor=red](!4 2 3 div 6 sqrt mul){0.06}
\pscircle*[linecolor=red](!4 2 3 div 6 sqrt mul neg){0.06}

\pscircle*[linecolor=magenta](!2 sqrt neg 4 add 0.57735 2 sqrt mul){0.06}
\pscircle*[linecolor=magenta](!2 sqrt neg 4 add -0.57735 2 sqrt mul){0.06}

\pscircle*[linecolor=blue](!0.366025 2 sqrt mul 4 add -1.36603 2 sqrt mul){0.06}
\pscircle*[linecolor=blue](!1.36603 2 sqrt mul 4 add 0.366025 2 sqrt mul){0.06}

\pscircle*[linecolor=green](!0.366025 2 sqrt mul 4 add 1.36603 2 sqrt mul){0.06}
\pscircle*[linecolor=green](!1.36603 2 sqrt mul 4 add -0.366025 2 sqrt mul){0.06}

\rput(-5.5,-1.8){a)}
\rput(-1.5,-1.8){b)}
\rput(2.5,-1.8){c)}

\end{pspicture}
\caption{Periodic trajectories corresponding to circles with rotation numbers $\dfrac14$ and $\dfrac16$.}\label{fig:c46-periodic}
\end{figure}

The corresponding level set in the phase space is divided by the saddle-connections into three parts:
\begin{itemize}
\item
the part containing all $6$-periodic trajectories: this part is bounded by four saddle-connections whose projections on the configuration space is shown on 
Figures \ref{fig:c46-sadd}a and \ref{fig:c46-sadd}b;

\item
two parts containing all $5$-periodic trajectories winding about the caustic in the clockwise and counterclockwise direction: these parts are bounded by saddle-connections winding in the same direction whose projections on the configuration space is shown in Figures \ref{fig:c46-sadd}a and \ref{fig:c46-sadd}b; the saddle-connections corresponding to Figure \ref{fig:c46-sadd}c are lying within the corresponding parts.
\end{itemize}

%\subsection{Domain bounded by circles circumscribed about a triangle and a pentagon}
%\input{4-examples/circles35}

\subsection{Domain bounded by circles with rotation numbers $\dfrac{5-\sqrt5}{10}$ and $\dfrac{\sqrt5}{10}$}
In this example, there exist six saddle-connections, represented in Figure 
\ref{fig:c5-sadd}.
Each polygonal line shown on the figure corresponds to two trajectories in the phase space, depending on direction of the motion. 
\begin{figure}[h]
\begin{pspicture}(-5.5,-2.1)(5.5,2.1)

  \SpecialCoor

% prva trajektorija

\psline[linewidth=1pt, linecolor=blue]
      (0, 1.69308)(2,0) (0, -1.69308)

\psellipticarc[linecolor=black, linewidth=1pt]
    (0,0)(2,2){-90}{90}

\psellipticarc[linecolor=black, linewidth=1pt]
    (0,0)(1.69308,1.69308){90}{270}

\psellipse[linecolor=black, linewidth=0.8pt, linestyle=dashed]%
    (0,0)(1.29223,1.29223)

\psline[linewidth=1pt](0, 1.69308)(0,2)
\psline[linewidth=1pt](0, -1.69308)(0,-2)

\pscircle*[linecolor=red](0, 1.69308){0.06}
\pscircle*[linecolor=red](0, -1.69308){0.06}

\pscircle*[linecolor=blue](2,0){0.06}

% druga trajektorija

\psline[linewidth=1pt, linecolor=green]
      (-4, 1.69308)(-5.66985, 0.279484)(-4.551299, -1.60081)(-2.109, -0.651239)
(-3.66985, 1.97256)(-4, 1.69308)

\psellipticarc[linecolor=black, linewidth=1pt]
    (-4,0)(2,2){-90}{90}

\psellipticarc[linecolor=black, linewidth=1pt]
    (-4,0)(1.69308,1.69308){90}{270}

\psellipse[linecolor=black, linewidth=0.8pt, linestyle=dashed]%
    (-4,0)(1.29223,1.29223)

\psline[linewidth=1pt](-4, 1.69308)(-4,2)
\psline[linewidth=1pt](-4, -1.69308)(-4,-2)

\pscircle*[linecolor=red](-4, 1.69308){0.06}
\pscircle*[linecolor=red](-4, -1.69308){0.06}

\pscircle*[linecolor=green](-5.66985, 0.279484){0.06}
\pscircle*[linecolor=green](-4.551299, -1.60081){0.06}
\pscircle*[linecolor=green](-2.109, -0.651239){0.06}
\pscircle*[linecolor=green](-3.66985, 1.97256){0.06}

% treca trajektorija

\psline[linewidth=1pt, linecolor=magenta]
      (4, -1.69308)(2.33015, -0.279484)(3.4487, 1.60081)(5.891, 0.651239)
(4.33015, -1.97256)(4, -1.69308)

\psellipticarc[linecolor=black, linewidth=1pt]
    (4,0)(2,2){-90}{90}

\psellipticarc[linecolor=black, linewidth=1pt]
    (4,0)(1.69308,1.69308){90}{270}

\psellipse[linecolor=black, linewidth=0.8pt, linestyle=dashed]%
    (4,0)(1.29223,1.29223)

\psline[linewidth=1pt](4, 1.69308)(4,2)
\psline[linewidth=1pt](4, -1.69308)(4,-2)

\pscircle*[linecolor=red](4, 1.69308){0.06}
\pscircle*[linecolor=red](4, -1.69308){0.06}

\pscircle*[linecolor=magenta](2.33015, -0.279484){0.06}
\pscircle*[linecolor=magenta](3.4487, 1.60081){0.06}
\pscircle*[linecolor=magenta](5.891, 0.651239){0.06}
\pscircle*[linecolor=magenta](4.33015, -1.97256){0.06}

\rput(-5.5,-1.8){a)}
\rput(-1.5,-1.8){b)}
\rput(2.5,-1.8){c)}

\end{pspicture}
\caption{Saddle-connections corresponding to circles with rotation numbers $\dfrac{5-\sqrt5}{10}$ and $\dfrac{\sqrt5}{10}$.}\label{fig:c5-sadd}
\end{figure}
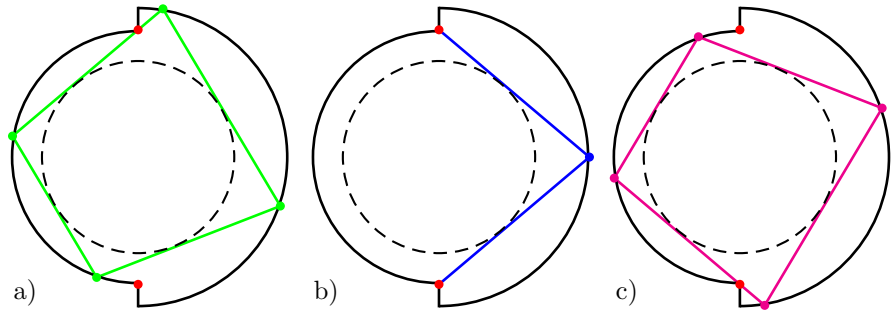

Vertices of the saddle-connections divide the billiard border into eleven parts, see Figure \ref{fig:c5-parts}.
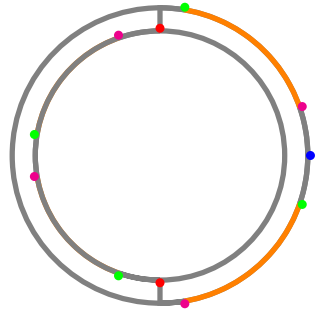
\begin{figure}[h]
\begin{pspicture}(-5.5,-2.1)(5.5,2.1)

  \SpecialCoor

\psset{linewidth=2pt}

\psellipticarc[linecolor=gray]
    (0,0)(2,2){-90}{-80.4984}
\psellipticarc[linecolor=orange]
    (0,0)(2,2){-80.4984}{-19.0031}
\psellipticarc[linecolor=gray]
    (0,0)(2,2){-19.0031}{19.0031}
\psellipticarc[linecolor=orange]
    (0,0)(2,2){19.0031}{80.4984}
\psellipticarc[linecolor=gray]
    (0,0)(2,2){80.4984}{90}

\psellipticarc[linecolor=gray]
    (0,0)(1.69308,1.69308){90}{109.003}
\psellipticarc[linecolor=orange]
    (0,0)(1.69308,1.69308){109.003}{170.498}
\psellipticarc[linecolor=gray]
    (0,0)(1.69308,1.69308){170.498}{189.502}
\psellipticarc[linecolor=orange]
    (0,0)(1.69308,1.69308){189.502}{250.997}
\psellipticarc[linecolor=gray]
    (0,0)(1.69308,1.69308){250.997}{270}

\psline[linecolor=gray](0, 1.69308)(0,2)
\psline[linecolor=gray](0, -1.69308)(0,-2)

\pscircle*[linecolor=red](0, 1.69308){0.06}
\pscircle*[linecolor=red](0, -1.69308){0.06}

\pscircle*[linecolor=blue](2,0){0.06}

\pscircle*[linecolor=green](-1.66985, 0.279484){0.06}
\pscircle*[linecolor=green](-0.551299, -1.60081){0.06}
\pscircle*[linecolor=green](1.891, -0.651239){0.06}
\pscircle*[linecolor=green](0.330149, 1.97256){0.06}

\pscircle*[linecolor=magenta](-1.66985, -0.279484){0.06}
\pscircle*[linecolor=magenta](-0.551299, 1.60081){0.06}
\pscircle*[linecolor=magenta](1.891, 0.651239){0.06}
\pscircle*[linecolor=magenta](0.33015, -1.97256){0.06}

\end{pspicture}
\caption{Parts of the boundary corresponding to circles with rotation numbers $\dfrac{5-\sqrt5}{10}$ and $\dfrac{\sqrt5}{10}$.}\label{fig:c5-parts}
\end{figure}

All trajectories in this billiard domain corresponding to the fixed caustic are periodic:
\begin{itemize}
\item
either all bouncing points of a given trajectory are in gray parts -- in this case the billiard particle hits twice each gray part until the trajectory becomes closed and the trajectory is $14$-periodic,
see Figures \ref{fig:c5-periodic}a and \ref{fig:c5-periodic}c.
Notice that such a trajectory bounces six times on each of the circles and once on each of the segments; 

\item 
or all bouncing points are in orange parts -- the particle will hit each part once until closure and the trajectory is $4$-periodic, see Figure \ref{fig:c5-periodic}b.
Such a trajectory reflects twice on each of the circular arcs.
\end{itemize}
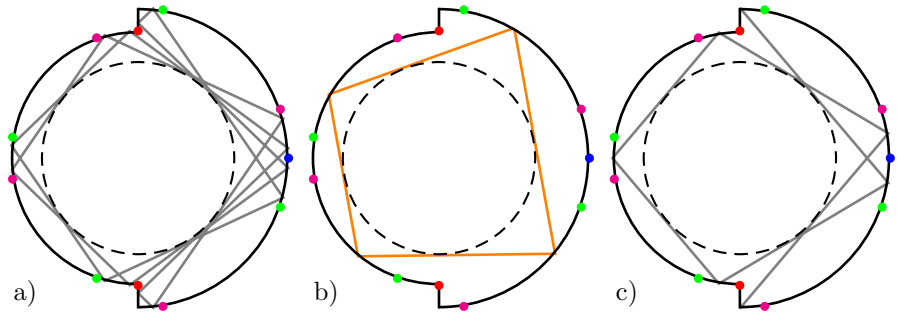
\begin{figure}[h]
\begin{pspicture}(-5.5,-2.1)(5.5,2.1)

  \SpecialCoor

% prva trajektorija

\psline[linewidth=1pt, linecolor=orange]
     (1, 1.73205)(-1.46625, 0.846539)(-1.07697, -1.30639)
(1.54321,-1.2722)(1, 1.73205)

\psellipticarc[linecolor=black, linewidth=1pt]
    (0,0)(2,2){-90}{90}

\psellipticarc[linecolor=black, linewidth=1pt]
    (0,0)(1.69308,1.69308){90}{270}

\psellipse[linecolor=black, linewidth=0.8pt, linestyle=dashed]%
    (0,0)(1.29223,1.29223)

\psline[linewidth=1pt](0, 1.69308)(0,2)
\psline[linewidth=1pt](0, -1.69308)(0,-2)

\pscircle*[linecolor=red](0, 1.69308){0.06}
\pscircle*[linecolor=red](0, -1.69308){0.06}

\pscircle*[linecolor=blue](2,0){0.06}

\pscircle*[linecolor=green](-1.66985, 0.279484){0.06}
\pscircle*[linecolor=green](-0.551299, -1.60081){0.06}
\pscircle*[linecolor=green](1.891, -0.651239){0.06}
\pscircle*[linecolor=green](0.330149, 1.97256){0.06}

\pscircle*[linecolor=magenta](-1.66985, -0.279484){0.06}
\pscircle*[linecolor=magenta](-0.551299, 1.60081){0.06}
\pscircle*[linecolor=magenta](1.891, 0.651239){0.06}
\pscircle*[linecolor=magenta](0.33015, -1.97256){0.06}

%druga trajektorija

\psline[linewidth=1pt, linecolor=gray]
(-5.68462, -0.169026)(-4.111381, -1.68941)(-2.00433, -0.131572)(-4, 1.79705)
(-3.80033, 1.99001)(-2.07025, -0.525427)(-4.444795,-1.63361)(-5.68462, 0.169026)(-4.111381, 1.68941)(-2.00433, 0.131572)(-4, -1.79705)(-3.80033, -1.99001)
(-2.07025, 0.525427)(-4.444795, 1.63361)(-5.68462, -0.169026)

\psellipticarc[linecolor=black, linewidth=1pt]
    (-4,0)(2,2){-90}{90}

\psellipticarc[linecolor=black, linewidth=1pt]
    (-4,0)(1.69308,1.69308){90}{270}

\psellipse[linecolor=black, linewidth=0.8pt, linestyle=dashed]%
    (-4,0)(1.29223,1.29223)

\psline[linewidth=1pt](-4, 1.69308)(-4,2)
\psline[linewidth=1pt](-4, -1.69308)(-4,-2)

\pscircle*[linecolor=red](-4, 1.69308){0.06}
\pscircle*[linecolor=red](-4, -1.69308){0.06}

\pscircle*[linecolor=green](-5.66985, 0.279484){0.06}
\pscircle*[linecolor=green](-4.551299, -1.60081){0.06}
\pscircle*[linecolor=green](-2.109, -0.651239){0.06}
\pscircle*[linecolor=green](-3.66985, 1.97256){0.06}

\pscircle*[linecolor=blue](-2,0){0.06}

\pscircle*[linecolor=magenta](-5.66985, -0.279484){0.06}
\pscircle*[linecolor=magenta](-4.551299, 1.60081){0.06}
\pscircle*[linecolor=magenta](-2.109, 0.651239){0.06}
\pscircle*[linecolor=magenta](-3.66985, -1.97256){0.06}

% treca trajektorija

\psline[linewidth=1pt, linecolor=gray]
(4, 2)(5.97256, -0.330149)(3.72052, -1.66985)(2.30692,0)(3.72052, 1.66985)
(5.97256, 0.330149)(4, -2)

\psellipticarc[linecolor=black, linewidth=1pt]
    (4,0)(2,2){-90}{90}

\psellipticarc[linecolor=black, linewidth=1pt]
    (4,0)(1.69308,1.69308){90}{270}

\psellipse[linecolor=black, linewidth=0.8pt, linestyle=dashed]%
    (4,0)(1.29223,1.29223)

\psline[linewidth=1pt](4, 1.69308)(4,2)
\psline[linewidth=1pt](4, -1.69308)(4,-2)

\pscircle*[linecolor=red](4, 1.69308){0.06}
\pscircle*[linecolor=red](4, -1.69308){0.06}

\pscircle*[linecolor=magenta](2.33015, -0.279484){0.06}
\pscircle*[linecolor=magenta](3.4487, 1.60081){0.06}
\pscircle*[linecolor=magenta](5.891, 0.651239){0.06}
\pscircle*[linecolor=magenta](4.33015, -1.97256){0.06}

\pscircle*[linecolor=blue](6,0){0.06}

\pscircle*[linecolor=green](2.33015, 0.279484){0.06}
\pscircle*[linecolor=green](3.4487, -1.60081){0.06}
\pscircle*[linecolor=green](5.891, -0.651239){0.06}
\pscircle*[linecolor=green](4.330149, 1.97256){0.06}

\rput(-5.5,-1.8){a)}
\rput(-1.5,-1.8){b)}
\rput(2.5,-1.8){c)}

\end{pspicture}
\caption{Periodic trajectories corresponding to circles with rotation numbers $\dfrac{5-\sqrt5}{10}$ and $\dfrac{\sqrt5}{10}$.}\label{fig:c5-periodic}
\end{figure}

The corresponding level set in the phase space is divided by the saddle-connections into three parts:
\begin{itemize}
\item
the part containing all $14$-periodic trajectories: this part is bounded by four saddle-connections whose projections on the configuration space is shown on 
Figures \ref{fig:c5-sadd}a and \ref{fig:c5-sadd}c.
The saddle-connections corresponding to Figure \ref{fig:c5-sadd}b are lying inside this part;

\item
two parts containing all $4$-periodic trajectories winding about the caustic in the clockwise and counterclockwise direction: these parts are bounded by saddle-connections winding in the same direction whose projections on the configuration space is shown in Figures \ref{fig:c5-sadd}a and \ref{fig:c5-sadd}c.
\end{itemize}

\section{General definitions and topological estimates}
Let $\mathcal{D}$ be a bounded domain in the plane such that its boundary $\Gamma=\partial\mathcal{D}$ is the union of finitely many arcs of confocal conics from the family (\ref{eq:confocal}).

We consider the billiard system within $\mathcal{D}$.
Any trajectory of this billiard will have a caustic -- a conic from (\ref{eq:confocal}) touching all lines containing segments of the trajectory.
Let us fix $\mathcal{C}_{\lambda_0}$ as caustic.

Notice that all tangent lines of a conic fill out an infinite domain in the plane: if the conic is an ellipse, the domain is its exterior; for a hyperbola, it is the part of the plane between its branches.

Denote by $\mathcal{D}_{\lambda_0}$ the intersection of $\mathcal{D}$ with the domain containing tangent lines of caustic $\mathcal{C}_{\lambda_0}$.
All billiard trajectories with caustic $\mathcal{C}_{\lambda_0}$ are placed in $\mathcal{D}_{\lambda_0}$.
$\mathcal{D}_{\lambda_0}$ is a bounded set whose boundary $\Gamma_{\lambda_0}=\partial\mathcal{D}_{\lambda_0}$
is the union of finitely many arcs of conics from (\ref{eq:confocal}).
We assume that $\mathcal{D}_{\lambda_0}$ is connected as well, otherwise we consider its connected component.

All billiard trajectories in domain $\mathcal{D}$ with the caustic $\mathcal{C}_{\lambda_0}$ will correspond to a certain compact leaf $\mathcal{M}(\lambda_0)$ in the phase space.
$\mathcal{M}_{\lambda_0}$ is obtained by gluing four copies of $\mathcal{D}_{\lambda_0}$ along the corresponding arcs of the boundary $\Gamma_{\lambda_0}=\partial\mathcal{D}_{\lambda_0}$, similarly as it is explained in Section \ref{sec:uvod-nosonja}.

On $\mathcal{M}_{\lambda_0}$, singular points of the billiard flow correspond to vertices of reflex angles on the boundary of $\mathcal{D}_{\lambda_0}$.
Since confocal conics are orthogonal to each other at points of intersection, only two types of such angles may appear: full angles and angles of $270^{\circ}$.
A vertex of a full angle is the projection of two singular points in the phase space.
Each of the singular points has four separatrices.
On the other hand, a vertex of a $270^{\circ}$ is a projection of only one singular point having six separatrices.

Using the Euler-Poincar\'e formula, as in \cite{Viana}, we get the following estimate for the total number $N=N(\mathcal{M}_{\lambda_0})$ of saddle-connections:

\begin{proposition}\label{prop:N(M)}
The total number $N=N(\mathcal{M}_{\lambda_0})$ of saddle-connections is bounded from above:
$$
N(\mathcal{M}_{\lambda_0})\le\frac{1}{2}\sum_{i=1}^ks_i=k-\chi(\mathcal{M}_{\lambda_0}),
$$
where $k$ is the number of singular points of the flow on $\mathcal{M}_{\lambda_0}$, and $s_1$, \dots, $s_k$ numbers of separatrices at each singular point.  
\end{proposition}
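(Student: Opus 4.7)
The plan is to deduce the estimate from the Poincar\'e--Hopf / Euler--Poincar\'e formula applied to the billiard flow on the compact orientable surface $\mathcal{M}_{\lambda_0}$, combined with an elementary counting of separatrix ends. A preliminary step is to verify the local picture at each singular point. The preceding discussion already identifies the two possible types: vertices of full angles lift to two singular points with $s_i=4$ separatrices each, and vertices of $270^{\circ}$ reflex angles lift to a single singular point with $s_i=6$ separatrices. In both cases, the orthogonality of confocal conics together with the four-sheeted gluing described in Section~\ref{sec:uvod-nosonja} produces the standard local model of an $s_i$-pronged saddle of a measured foliation, whose index is $1-s_i/2$.

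Granted this local analysis, the identity $\tfrac{1}{2}\sum_{i=1}^{k}s_i=k-\chi(\mathcal{M}_{\lambda_0})$ follows at once from Poincar\'e--Hopf. Summing the indices of the singularities of the flow gives
\begin{equation*}
\chi(\mathcal{M}_{\lambda_0}) \;=\; \sum_{i=1}^{k}\Bigl(1-\tfrac{s_i}{2}\Bigr) \;=\; k-\tfrac{1}{2}\sum_{i=1}^{k}s_i,
\end{equation*}
which rearranges to the claimed equality.

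The inequality $N\le\tfrac{1}{2}\sum_{i=1}^{k}s_i$ is a bookkeeping argument. A separatrix is a half-orbit issuing from a single singular point, and a saddle-connection uses exactly two separatrices, one at each of its (possibly coincident) endpoints; a homoclinic loop still occupies two distinct separatrices at the same vertex. Since different saddle-connections cannot share a separatrix, one has $2N\le\sum_{i}s_i$.

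The main obstacle in this argument is the local verification at the reflex-angle vertices on the glued surface $\mathcal{M}_{\lambda_0}$, namely the statement that a full-angle vertex really produces two four-pronged saddles, while a $270^{\circ}$ vertex produces a single six-pronged one. This requires tracking which of the four sheets $\mathcal{R}_1,\dots,\mathcal{R}_4$ are identified near the vertex, and how many incoming/outgoing directions of the flow survive the identification. Once this local picture is settled, the rest of the proof is the Euler--Poincar\'e accounting used in \cite{Viana}.
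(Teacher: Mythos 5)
Your proposal is correct and follows exactly the route the paper intends: the paper gives no written proof beyond citing the Euler--Poincar\'e formula ``as in \cite{Viana}'', and your Poincar\'e--Hopf computation with $s_i$-pronged saddles of index $1-s_i/2$, together with the two-separatrices-per-saddle-connection count giving $2N\le\sum_i s_i$, is precisely that argument spelled out.
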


As a corollary, we get the following

\begin{proposition}
Consider billiard within $\mathcal{D}$ with $\mathcal{C}_{\lambda_0}$ as a caustic.
If the corresponding subdomain $\mathcal{D}_{\lambda_0}$ has $\tilde{k}$ reflex angles on its boundary $\Gamma_{\lambda_0}$ then: 
\begin{itemize}
\item $N\le 3\tilde{k}$;
\item $g(\mathcal{M}_{\lambda_0})=\tilde{k}+1$.
\end{itemize}
\end{proposition}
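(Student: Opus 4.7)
The plan is to apply Proposition~\ref{prop:N(M)} after expressing the total number $k$ of singular points and the total separatrix count $\sum_{i=1}^{k}s_i$ in terms of $\tilde{k}$, and then to convert the resulting Euler-characteristic computation into a genus.

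First I would argue that in our setting every reflex vertex on $\Gamma_{\lambda_0}$ is of the $270^{\circ}$ type. The boundary $\Gamma_{\lambda_0}$ is a union of arcs of conics from the confocal pencil~(\ref{eq:confocal}), and any two such arcs either meet transversely and orthogonally (since two distinct confocal conics are perpendicular at each intersection) or lie on the same conic; hence the tangential full-angle configuration does not occur on $\Gamma_{\lambda_0}$. Combined with the description preceding Proposition~\ref{prop:N(M)}, this means each of the $\tilde{k}$ reflex vertices contributes exactly one singular point of the billiard flow on $\mathcal{M}_{\lambda_0}$ carrying six separatrices, so
$$
k = \tilde{k}, \qquad \sum_{i=1}^{k} s_i = 6\tilde{k}.
$$

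Plugging into the inequality of Proposition~\ref{prop:N(M)} immediately yields the first bullet, $N \le \tfrac{1}{2}\sum s_i = 3\tilde{k}$. Plugging into the accompanying equality yields $\chi(\mathcal{M}_{\lambda_0}) = k - \tfrac{1}{2}\sum s_i = \tilde{k} - 3\tilde{k} = -2\tilde{k}$. Since the leaf $\mathcal{M}_{\lambda_0}$ is orientable -- it is assembled from four oriented copies of $\mathcal{D}_{\lambda_0}$ glued along boundary arcs in the manner described in Section~\ref{sec:uvod-nosonja}, so that the orientation-reversing reflections compose pairwise to orientation-preserving identifications -- we have $\chi = 2-2g$, and therefore $g(\mathcal{M}_{\lambda_0}) = \tilde{k}+1$, which is the second bullet.

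The delicate point I foresee is the exclusion of full-angle vertices, which I would treat by enumerating the possible ways two arcs of $\Gamma_{\lambda_0}$ can meet at a reflex vertex: two distinct confocal conics; a conic arc meeting a straight boundary segment lying on the degenerate conic of the pencil; or a conic arc meeting a portion of the caustic $\mathcal{C}_{\lambda_0}$ itself (when $\mathcal{C}_{\lambda_0}$ bounds $\mathcal{D}_{\lambda_0}$). In each subcase the two tangent lines at the meeting point are perpendicular, so the interior angle is exactly $270^{\circ}$ and no full angle is produced; once this is in place the rest of the argument is a routine insertion into Proposition~\ref{prop:N(M)}.
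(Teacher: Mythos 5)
Your argument hinges on the claim that every reflex vertex of $\Gamma_{\lambda_0}$ is of the $270^{\circ}$ type, and that claim is false --- it also contradicts the setup of the paper, which explicitly allows both types (``only two types of such angles may appear: full angles and angles of $270^{\circ}$'') and stresses in the example following the proposition that the genus formula holds ``independently if it is a $270^{\circ}$ angle or a full angle.'' A full angle arises whenever a boundary arc has a free endpoint in the interior of the domain, i.e.\ at the tip of a slit: for instance, the interior of an ellipse with a slit along an arc of a confocal hyperbola (or along a segment of a degenerate conic) that does not reach the outer boundary. There the two ``meeting arcs'' are the two sides of the same arc, the tangent directions coincide rather than being orthogonal, and the interior angle is $360^{\circ}$. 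Your case enumeration actually lists this possibility (``or lie on the same conic'') but then dismisses it as not occurring, which is exactly backwards. So your proof only covers the sub-case $\tilde{k}_2=0$, where $\tilde{k}_2$ is the number of full-angle vertices.

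For the genus statement the gap is easily repaired: a full-angle vertex projects to two singular points with four separatrices each, so it contributes $2$ to $k$ and $8$ to $\sum s_i$, hence $1-\tfrac{s_i}{2}=-1$ twice, i.e.\ the same contribution $-2$ to $\chi(\mathcal{M}_{\lambda_0})$ as a single six-separatrix point; thus $\chi=-2\tilde{k}$ and $g=\tilde{k}+1$ in all cases, which is precisely the point of the paper's remark that the genus depends only on the number of reflex angles and not on their types. The bound on $N$ is more serious: feeding the full-angle data into Proposition~\ref{prop:N(M)} gives $N\le\tfrac12\sum s_i=3\tilde{k}_1+4\tilde{k}_2$, which exceeds $3\tilde{k}$ whenever $\tilde{k}_2>0$, so your route (a direct insertion into Proposition~\ref{prop:N(M)}) does not deliver $N\le3\tilde{k}$ in the presence of full angles and would need a separate argument limiting the number of saddle-connections attached to the two four-separatrix singular points over a slit tip. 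Since the entire derivation of both bullets rests on the unjustified exclusion of full angles, the proof as written has a genuine gap.
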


Notice that the genus of the surface $\mathcal{M}_{\lambda_0}$ depends only on the number of reflex angles on the boundary of $\mathcal{D}_{\lambda_0}$ and not of their types.
Also, $\tilde{k}\le k$.

%\begin{corollary}
%For billiards in domains bounded by arcs of several confocal conics, the number of saddle-connections with a fixed caustic is uniformly bounded by genus: $N\le 3g-3$.
%\end{corollary}

\begin{example}
\begin{itemize}
\item
If there are no reflex angles on the boundary, i.e.~ $k=0$, then $\mathcal{M}_{\lambda_0}$ is a torus: $g=1$, $N=0$;

\item
if there is only one reflex angle on the boundary, independently if it is a $270^{\circ}$ angle or a full angle, we have that $g=2$.
\end{itemize}
\end{example}

It is well known that the Liouville-Arnold theorem (see \cite{ArnoldMMM}) describes regular compact leaves of a completely integrable Hamiltonian system as tori, with dymanics being quasi-periodic on these inavriant tori.
On some of the tori, the dynamics is exclusively periodic and for a fixed such a torus, the period is fixed.

We finish this section by formulating of an analogue of the Liuoville-Arnold theorem for pseudo-integrable billiard systems.
It is a consequence of the Maier theorem from the theory of measured foliations (see \cite{Maier1943,Viana}).
In our case, the measured foliation is defined by the kernel of an exact one-form $\beta_{\lambda}:=d K_{\lambda}$, where functions $K_{\lambda}$ are defined in Section \ref{sec:ellbilliards}.

\begin{theorem}\label{th:maier}
There exist paiwise disjoint open domains $D_1$, \dots, $D_n$ on $\mathcal{M}_{\lambda_0}$, each of them being invariant under the billiard flow, such that their closures cover $\mathcal{M}_{\lambda_0}$ and for each $j\in\{1,\dots,n\}$:
\begin{itemize}
 \item
either $D_j$ consists of periodic billiard trajectories and is homeomorphic to a cylinder;
 \item
or $D_j$ consists of non-periodic trajectories all of which are dense in $D_j$.
\end{itemize}
The boundary of each $D_j$ consists of saddle-connections.
\end{theorem}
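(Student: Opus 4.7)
The plan is to realize the billiard dynamics on $\mathcal{M}_{\lambda_0}$ as a measured foliation with finitely many singularities on a compact orientable surface, and then invoke Maier's structure theorem for such foliations.

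First, I would confirm the setup: by Proposition \ref{prop:N(M)} together with the gluing construction described in Section \ref{sec:uvod-nosonja}, $\mathcal{M}_{\lambda_0}$ is a compact orientable surface of finite genus formed by four copies of $\mathcal{D}_{\lambda_0}$ glued along the reflection arcs. The billiard flow extends continuously across these gluings, with singularities supported only at the lifts of the reflex-angle vertices. At each such singularity the local picture is the standard saddle with either four separatrices (at a lifted full-angle vertex, which appears twice in the cover) or six separatrices (at a lifted $270^{\circ}$-vertex), exactly as already noted.

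Second, I would describe the orbit foliation of this flow as the measured foliation defined by the kernel of an exact one-form. Concretely, for any $\lambda\neq\lambda_0$ the function $K_{\lambda}$ from Section \ref{sec:ellbilliards} is constant along billiard trajectories (by the corollary of Section \ref{sec:ellbilliards}), yet varies transversally on $\mathcal{M}_{\lambda_0}$. Hence $\beta_{\lambda}:=dK_{\lambda}$ is an exact $1$-form whose kernel is tangent to the billiard orbits, and integration of $|\beta_{\lambda}|$ along transversals supplies a finite transverse invariant measure. Compatibility with the reflection law at $\Gamma_{\lambda_0}$ follows from the commutation $\{K_{\lambda},K_{\lambda_0}\}=0$ together with the invariance of $K_{\lambda}$ under the reflections, so that $\beta_{\lambda}$ descends consistently to the glued surface and vanishes precisely at the lifted reflex vertices.

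Third, with the measured foliation in place, I would apply Maier's theorem as formulated in \cite{Maier1943,Viana}: any measured foliation with finitely many singularities on a compact orientable surface admits a decomposition into finitely many pairwise disjoint invariant open components whose closures cover the surface, such that each component is either a cylinder foliated by closed leaves or a minimal set in which every non-singular leaf is dense, and the boundary of each component is a finite union of separatrices joining singular points. In our setting these separatrices are precisely saddle-connections, which yields the stated dichotomy for the sets $D_1,\dots,D_n$ and the description of $\partial D_j$.

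The main obstacle is the second step: verifying that $\beta_{\lambda}$ genuinely defines the orbit foliation globally, including across the four-sheeted gluing and the two different types of singular points. One must check that the zero locus of $dK_{\lambda}$ on $\mathcal{M}_{\lambda_0}$ matches exactly the lifted reflex vertices with the correct multiplicities (indices $-1$ for the four-separatrix saddles and $-2$ for the six-separatrix ones), so that Maier's hypotheses are fulfilled without spurious singularities. Once this consistency check is in place, the remaining arguments are routine applications of Maier's theorem and Euler-characteristic bookkeeping already implicit in Proposition \ref{prop:N(M)}.
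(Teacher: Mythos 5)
Your proposal is correct and follows essentially the same route as the paper: the authors likewise define the measured foliation on $\mathcal{M}_{\lambda_0}$ by the kernel of the exact one-form $\beta_{\lambda}=dK_{\lambda}$ and deduce the decomposition into periodic cylinders and minimal components directly from Maier's theorem. Your additional consistency check on the zero locus and indices of $dK_{\lambda}$ at the lifted reflex vertices is a reasonable elaboration of what the paper leaves implicit.
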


We see, that in contrast to completely integrable Hamiltonian systems, compact leaves of our billiards could be of a genus greater than $1$.
Moreover, one leaf could contain regions with periodic trajectories with different periods for different regions and simultaneousely could contain regions with non-periodic motion.
Because of that, we call such systems \emph{pseudo-integrable}, taking into account the fact that they possess two independent commuting first integrals, as it has been shown in Section \ref{sec:ellbilliards}.

\section{Poncelet theorem and Cayley-type conditions}\label{sec:cayley}
For billiards within confocal conics without reflex angles on the boundary, it is well known that the famous Poncelet porism holds (see \cites{DragRadn2006,DragRadn2011book}):
\begin{itemize}
\item[(A)] 
if there is a periodic billiard trajectory with one initial point of the boundary, then there are infinitely many such periodic trajectories with the same period, sharing the same caustic; 
\item[(B)]
even more is true, if there is one periodic trajectory, then all trajectories sharing the same caustic are periodic with the same period.
\end{itemize}

However, when reflex angles exist, which is the case studied in the present paper, one can say that (A) is still generally true.
However, (B) is not true any more.
In other words, \emph{the Poncelet porism is true locally, but not globally}.

\begin{theorem}\label{th:poncelet}
There exist subsets $\delta_1$, \dots, $\delta_n$ of the boundary $\Gamma_{\lambda_0}$, with the following properties:
\begin{itemize}
 \item
$\delta_1$, \dots, $\delta_n$ are invariant under the billiard map; 

 \item 
$\delta_1$, \dots, $\delta_n$ are pairwise disjoint;
 
 \item
each $\delta_i$ is a finite union of $d_i$ open subarcs of $\Gamma_{\lambda_0}$: 
$$
\delta_i=\bigcup_{j=1}^{d_i}\ell_{j}^i;
$$

 \item
closure of $\delta_1\cup\dots\cup \delta_N$ is $\Gamma$,
\end{itemize}
such that they satisfy:
\begin{itemize}
 \item
if one billiard trajectory with bouncing points within $\delta_i$ is periodic, then all such trajectories are periodic with the same period $n_i$.
Moreover, $n_i$ is a multiple of $d_i$ and every such a trajectory bounces the same number $\dfrac{n_i}{d_i}$ of times off each arc $\ell_{j}^i$;

\item
if billiard trajectories having vertices in $\delta_i$ are non-periodic, then the bouncing points of each trajectory are dense in $\delta_i$.
\end{itemize}
The boundary of each $\delta_i$ consists of bouncing points of saddle-connections.
\end{theorem}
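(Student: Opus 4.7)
The plan is to deduce this boundary decomposition directly from the phase-space decomposition provided by Theorem \ref{th:maier}, by projecting to $\Gamma_{\lambda_0}$. Apply Theorem \ref{th:maier} to the compact leaf $\mathcal{M}_{\lambda_0}$ to obtain pairwise disjoint, flow-invariant open domains $D_1,\dots,D_n$ whose closures cover $\mathcal{M}_{\lambda_0}$ and whose boundaries consist of saddle-connections. Let $\pi\colon\mathcal{M}_{\lambda_0}\to\Gamma_{\lambda_0}$ be the projection sending each phase-space point to its underlying bouncing point on the boundary, and define $\delta_i:=\pi(D_i)\cap\Gamma_{\lambda_0}$.

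Next I would verify the structural properties. Invariance of $\delta_i$ under the billiard map $T$ follows from flow-invariance of $D_i$ together with the fact that $T$ is the first-return map of the billiard flow to the boundary. Pairwise disjointness is inherited from the $D_i$, and the density of $\bigcup_i\delta_i$ in $\Gamma$ is inherited from the equality $\overline{\bigcup_i D_i}=\mathcal{M}_{\lambda_0}$. Since each component of $\partial D_i$ is a saddle-connection, and a saddle-connection contributes only finitely many bouncing points on $\Gamma_{\lambda_0}$, the set $\delta_i$ is an open subset of $\Gamma_{\lambda_0}$ with finite boundary in $\Gamma_{\lambda_0}$, hence decomposes into finitely many open arcs $\ell_1^i,\dots,\ell_{d_i}^i$ whose endpoints are precisely bouncing points of saddle-connections.

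For the dichotomy: in the non-periodic case, density of every orbit in the open domain $D_i$ (given by Theorem \ref{th:maier}) projects under $\pi$ to density of its sequence of bouncing points in $\delta_i$, as required. In the periodic case, $D_i$ is a cylinder filled by trajectories all of the same period $n_i$. The map $T$ is continuous on the interior of each $\ell_j^i$ and sends $\delta_i$ bijectively to itself, so it induces a permutation $\sigma$ of the set $\{\ell_1^i,\dots,\ell_{d_i}^i\}$. The number of times a periodic trajectory visits a given $\ell_j^i$ is locally constant on $D_i$ (the only way it could change is by a bouncing point crossing an endpoint of some $\ell_j^i$, but those endpoints belong to saddle-connections, hence lie outside $D_i$), and therefore constant on the connected cylinder $D_i$. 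Consequently, starting from any point of any arc $\ell_j^i$, the iterates under $T$ visit a single cycle of $\sigma$. Since $\delta_i$ was defined as the union of \emph{all} bouncing points of orbits in $D_i$, the permutation $\sigma$ must consist of a single cycle of length $d_i$, and each periodic trajectory bounces exactly $n_i/d_i$ times on each $\ell_j^i$. In particular $d_i\mid n_i$.

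The main obstacle is the transitivity statement, i.e.\ that $\sigma$ is a single $d_i$-cycle. The heart of the argument is the connectedness of $D_i$: if $\sigma$ had two distinct cycles, the preimages under $\pi$ of the unions of arcs in each cycle would be nonempty $T$-invariant open subsets partitioning $D_i$ (up to saddle-connections on the boundary), producing a decomposition of the cylinder $D_i$ into proper invariant pieces inconsistent with the Maier classification. This requires a careful local analysis of $\pi^{-1}$ near the arcs $\ell_j^i$ (using that the four copies of $\mathcal{D}_{\lambda_0}$ are glued to form $\mathcal{M}_{\lambda_0}$ as in Section~\ref{sec:uvod-nosonja}) to confirm that the cycle structure of $\sigma$ genuinely reflects the decomposition of $D_i$ into connected components.
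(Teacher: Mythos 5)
Your proposal takes essentially the same route as the paper: the paper's proof is a two-line remark deducing the theorem from Theorem \ref{th:maier} by setting $\delta_i=\Gamma_{\lambda_0}\cap D_i$, which is exactly the projection of the invariant domains onto the boundary that you use. Your write-up simply supplies more detail than the paper does (in particular the permutation/single-cycle argument for $d_i\mid n_i$, which the paper leaves implicit).
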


This theorem is a consequence of Theorem \ref{th:maier} from the previous section.
The proof follows from the fact that each of the domains $D_i$ intersects the boundary $\Gamma_{\lambda_0}$ and forms $\delta_i=\Gamma_{\lambda_0}\cap D_i$.

This theorem is a variant of Maier theorem (see \cites{Maier1943,Viana}), i.e.~ Theorem \ref{th:poncelet} from the previous section.

In \cite{DragRadn2004} conditions of Cayley's type for periodicity of billiards within several confocal quadrics in the Euclidean space of an arbitrary dimension were derived, see also \cite{DragRadn2006b} for detailed examples.

We analyzed there billiards within domains bounded by arcs of several confocal quadrics and \emph{the billiad ordered game} within a few confocal ellipsoids.
Unlike in the present article, domains considered in \cite{DragRadn2004,DragRadn2006b} did not contain reflex angles at the boundary.
However, the technique used there to describe periodic trajectories can be directly transferred to the present problems.

Before stating the Cayley-type conditions, recall that a point is being reflected off conic $\mathcal{C}_{\lambda_0}$ \emph{from outside} if the corresponding Jacobi elliptic coordinate achieves a local maximum at the reflection point, and \emph{from inside} if there the coordinate achieves a local minimum (see \cite{DragRadn2004}).

\begin{theorem}\label{th:cayley}
Consider domain $\mathcal{D}$ bounded by arcs of $k$ ellipses $\mathcal{C}_{\beta_1}$, \dots, $\mathcal{C}_{\beta_k}$,  $l$ hyperbolas $\mathcal{C}_{\gamma_1}$, \dots, $\mathcal{C}_{\gamma_l}$, and several segments belonging to degenerate conics from the confocal family (\ref{eq:confocal}):
$$
\beta_1,\dots,\beta_k\in(-\infty,b),\ k\ge1,\ 
\gamma_1,\dots,\gamma_l\in(b,a),\ l\ge0.
$$
Let $\mathcal{C}_{\alpha_0}$ be an ellipse contained within all ellipses $\mathcal{C}_{\beta_1}$, \dots, $\mathcal{C}_{\beta_k}$: $b>\alpha_0>\beta_i$ for all $i\in\{1,\dots,k\}$.
A necessary condition for the existence of a billiard trajectory within $\mathcal{D}$ with $\mathcal{C}_{\alpha_0}$ as a caustic which becomes closed after:
\begin{itemize}
\item 
$n_i'$ reflections from inside and $n_i''$ reflections from outside off $\mathcal{C}_{\beta_i}$, $1\le i\le k$;

\item
$m_j'$ reflections from inside and $m_j''$ reflections from outside off $\mathcal{C}_{\gamma_j}$, $1\le j\le l$;
 
\item
total number of $p$ intersections with the $x$-axis and reflections off the segments contained in the $x$-axis;

\item
total number of $q$ intersections with the $y$-axis and reflections off the segments contained in the $y$-axis;
\end{itemize}
is:
\begin{gather*}
\sum_{i=1}^k(n_i'-n_i'')(\mathcal{A}(P_{\beta_i})-\mathcal{A}(P_{\alpha_0})) +
\sum_{j=1}^l(m_j'-m_j'')\mathcal{A}(P_{\gamma_j}) +
p\mathcal{A}(P_a)-q\mathcal{A}(P_b)=0,
\\
m_j'-m_j''+p-q=0.
\end{gather*}
Here $\mathcal{A}$ is the Abel-Jacobi map of the ellitic curve:
$$
\Gamma\ :\ s^2=\mathcal{P}(t):=(a-t)(b-t)(\alpha_0-t),
$$
and $P_{\delta}$ denotes point $(\delta,\sqrt{\mathcal{P}(\delta)})$ on $\Gamma$.
\end{theorem}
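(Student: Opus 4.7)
The plan is to lift the billiard dynamics to the elliptic curve $\Gamma: s^{2}=\mathcal{P}(t)=(a-t)(b-t)(\alpha_{0}-t)$ and linearize it via the Abel--Jacobi map, following the scheme used in \cite{DragRadn2004, DragRadn2006b}. The crucial observation is that the presence of reflex angles on $\partial\mathcal{D}$ only affects the global topology of the invariant surface $\mathcal{M}_{\lambda_{0}}$ (as discussed in Theorem \ref{th:maier}), but does not alter the local analytic description of an individual trajectory with caustic $\mathcal{C}_{\alpha_{0}}$. Consequently, a closure condition valid in the smooth confocal case transfers essentially verbatim to the present setting.

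First I would introduce Jacobi elliptic coordinates $(\lambda_{1},\lambda_{2})$ associated to the confocal pencil (\ref{eq:confocal}). Along a trajectory with caustic $\mathcal{C}_{\alpha_{0}}$, the elliptic coordinate $\lambda_{1}<b$ oscillates between $\alpha_{0}$ (caustic tangency) and the parameters $\beta_{i}$ (reflections off ellipses), while the hyperbolic coordinate $b<\lambda_{2}<a$ oscillates between the parameters $\gamma_{j}$ and the ramification values $a,b$ corresponding to the coordinate axes. Every extremal event (reflection, tangency, axis crossing) occurs at a branch point of $\sqrt{\mathcal{P}(t)}$, so the map $t\mapsto(t,\pm\sqrt{\mathcal{P}(t)})$ lifts each trajectory to a piecewise path on the two real ovals of $\Gamma$, with each event switching the sign of $s$.

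Under the Abel--Jacobi map $\mathcal{A}\colon\Gamma\to\Jac(\Gamma)$ the dynamics becomes additive, and the integrated effect of each elementary event is a constant shift in $\Jac(\Gamma)$: a reflection from inside versus from outside off $\mathcal{C}_{\beta_{i}}$ contributes $\pm\bigl(\mathcal{A}(P_{\beta_{i}})-\mathcal{A}(P_{\alpha_{0}})\bigr)$; a reflection off the hyperbola $\mathcal{C}_{\gamma_{j}}$ contributes $\pm\mathcal{A}(P_{\gamma_{j}})$; and a crossing or reflection on the $y$-axis (the degenerate conic $\mathcal{C}_{a}$) contributes $\pm\mathcal{A}(P_{a})$, respectively on the $x$-axis contributes $\pm\mathcal{A}(P_{b})$. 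The sign in each case records whether the relevant elliptic coordinate attains a local maximum or a local minimum, which is precisely the inside/outside dichotomy recalled before the statement of the theorem. Summing over one presumed period and demanding the net shift in $\Jac(\Gamma)$ vanish modulo periods yields the first displayed equality of the theorem.

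The second identity $\sum_{j}(m_{j}'-m_{j}'')+p-q=0$ is not of Abel--Jacobi type: it is a homological/topological constraint arising from the fact that closure requires the trajectory to return not merely to the same base point but also on the same sheet, i.e.\ with the same pair of signs of $(\dot{\lambda}_{1},\dot{\lambda}_{2})$. It can be derived by counting, modulo $2$ or as a signed sum, how many times the hyperbolic coordinate and the coordinate axes are crossed, and equating this to zero. The main obstacle I anticipate is the sign bookkeeping: one must verify case by case that ``from inside/outside off $\mathcal{C}_{\beta_{i}}$ or $\mathcal{C}_{\gamma_{j}}$'' matches the extremal behaviour of the corresponding Jacobi coordinate on the correct real oval of $\Gamma$, and that the branch switchings induced by the two axes match the signs declared for $\pm\mathcal{A}(P_{a})$ and $\mp\mathcal{A}(P_{b})$. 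Once this table of local contributions is checked, necessity of both stated conditions is immediate by reading off the Abel--Jacobi image of the hypothetical closing cycle, exactly as in \cite{DragRadn2004}.
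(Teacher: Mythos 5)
Your proposal follows essentially the same route as the paper: the proof there likewise passes to Jacobi elliptic coordinates, integrates $dt/\sqrt{\mathcal{P}(t)}$ along the trajectory, and reads off the closure condition from the local extrema of $\lambda_1$ (at caustic tangencies and ellipse reflections) and of $\lambda_2$ (at hyperbola reflections and axis crossings), with the second identity being exactly your observation that the maxima and minima of $\lambda_2$ must balance over a period. The only point to watch is the one you already flag: your geometrically natural assignment ($y$-axis $\leftrightarrow\mathcal{C}_a\leftrightarrow P_a$, $x$-axis $\leftrightarrow\mathcal{C}_b\leftrightarrow P_b$) has to be reconciled with the pairing $p\,\mathcal{A}(P_a)-q\,\mathcal{A}(P_b)$ as printed in the statement, where $p$ counts the $x$-axis events.
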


\begin{proof}
Following Jacobi \cite{JacobiGW} and Darboux \cite{Darboux1870}, similarly as in \cite{DragRadn2004}, we consider sums
$$
\int\frac{d\lambda_1}{\sqrt{\mathcal{P}(\lambda_1)}}+\int\frac{d\lambda_2}{\sqrt{\mathcal{P}(\lambda_2)}}
\ \text{and}\ 
\int\frac{\lambda_1d\lambda_1}{\sqrt{\mathcal{P}(\lambda_1)}}+\int\frac{\lambda_2d\lambda_2}{\sqrt{\mathcal{P}(\lambda_2)}}
$$
over billiard trajectory $A_1\dots A_N$.
Here $(\lambda_1,\lambda_2)$ are Jacobi elliptic coordinates, $\lambda_1<\lambda_2$.
The second integral is equal to the length of the trajectory, while the first one is zero.

Notice that, along a trajectory, $\lambda_1$ achieves local extrema at points of reflection off ellipses and touching points with the caustic, and $\lambda_2$ at points of reflection off hyperbolas and intersection points with the coordinate axes, we obtain that $A_1=A_N$ is equivalent to the condition stated.
\end{proof}

We illustrate this theorem on the example when the billiard table is $\mathcal{D}_0$, as defined in Section \ref{sec:uvod-nosonja}.

\begin{example}
A necessary condition for the existence of a billiard trajectory within $\mathcal{D}_0$ with $\mathcal{C}_{\alpha_0}$ as a caustic, such that it becomes closed after $n_1$ reflections off $\mathcal{C}_{\beta_1}$ and $n_2$ reflections off $\mathcal{C}_{\beta_2}$ is:
$$
n_1\mathcal{A}(P_{\beta_1})+n_2\mathcal{A}(P_{\beta_2})=(n_1+n_2)\mathcal{A}(P_{\alpha_0}).
$$
Notice that in this case number $p$ and $q$ are always even and equal to each other.
Since $2\mathcal{A}(P_a)=2\mathcal{A}(P_b)$, the corresponding summands are cancelled out.
\end{example}

\section{Connection with interval exchage transformation}\label{sec:interval}
In this section, we are going to establish a connection of the billiard dynamics within domain $\mathcal{D}_0$ defined in Section \ref{sec:uvod-nosonja} with the inteval exchange transformation.

\subsection{Interval exchange maps}\label{sec:int.ex}
Let $I\subset\mathbf{R}$ be an interval, and $\{I_{\alpha}\mid\alpha\in\mathcal{A}\}$ its finite partition into subintervals.
Here $\mathcal{A}$ is a finite set of at least two elements.
We consider all intervals to be closed on the left and open on the right.

\emph{An interval exchange map} is a bijection of $I$ into itself, such that its restriction on each $I_{\alpha}$ is a translation.
Such a map $f$ is determined by the following data:
\begin{itemize}
 \item
a pair $(\pi_0,\pi_1)$ of bijections $\mathcal{A}\to\{1,\dots,d\}$ describing the order of the subintervals $\{I_{\alpha}\}$ in $I$ and $\{f(I_{\alpha})\}$ in $f(I)=I$.
We denote:
$$
\pi=\left(
\begin{array}{cccc}
\pi_0^{-1}(1) & \pi_0^{-1}(2) & \dots & \pi_0^{-1}(d)\\
\pi_1^{-1}(1) & \pi_1^{-1}(2) & \dots & \pi_1^{-1}(d)
\end{array}
\right).
$$

\item
a vector $\lambda=(\lambda_{\alpha})_{\alpha\in\mathcal{A}}$ of the lengths of $I_{\alpha}$.
\end{itemize}

\subsection{Billiard dynamics}\label{sec:dynamics}
To each billiard trajectory, we join the sequence:
$$
\{(X_n,s_n)\},
\quad
X_n\in\mathcal{C}_{\alpha_0},
\quad
s_n\in\{+,-\}
$$
where $X_n$ are joint points of the trajectory with the caustic,
while $s_n=+$ if at $X_n$ the trajectory is winding counterclockwise and $s_n=-$ if it is winding clockwise about the caustic.

Introduce metric $\mu$ on the caustic $\mathcal{C}_{\alpha_0}$ as in Proposition \ref{prop:rotation}.
Then, we parametrize $\mathcal{C}_{\alpha_0}$ by parameters:
$$
p\ :\ \mathcal{C}_{\alpha_0}\to[0,1),
\quad
q\ :\ \mathcal{C}_{\alpha_0}\to[-1,0),
$$
which are natural with respect to $\mu$ such that $p$ is oriented counterclockwise and $q$ clockwise along $\mathcal{C}_{\alpha_0}$, and the values $p=0$ and $q=-1$ correspond to points $P_0$, $Q_0$ respectively, as shown in Figure \ref{fig:param}.
\begin{figure}[h]
\begin{pspicture}(-6,-2.5)(7,2.5)
  \SpecialCoor

%prva elipsa

\psset{linecolor=blue,linewidth=1pt}

\psline(!-3 2 sqrt)(!2 45 cos mul 3 sub 45 sin)
\psline(!2 135 cos mul 3 sub 135 sin)(!1.21218 3 sub 2.02031)
(!2 11.6015 cos mul 3 sub 11.6015 sin)
\psline(!-3 2 sqrt neg)(!2 45 cos mul neg 3 sub 45 sin neg)

\psset{linecolor=black,linewidth=1.3pt}

\psellipticarc(-3,0)(!8 sqrt 5 sqrt){-90}{90}
\psellipticarc(-3,0)(!5 sqrt 2 sqrt){90}{270}
\psellipse(-3,0)(2, 1)

\psline(!-3 2 sqrt)(!-3 5 sqrt)
\psline(!-3 2 sqrt neg)(!-3 5 sqrt neg)

\psset{linecolor=blue}

\pscircle*(!2 45 cos mul 3 sub 45 sin){0.08}
\pscircle*(!2 135 cos mul 3 sub 135 sin){0.08}
\pscircle*(!2 11.6015 cos mul 3 sub 11.6015 sin){0.08}
\pscircle*(!2 45 cos mul neg 3 sub 45 sin neg){0.08}

%\psset{linecolor=red}
%\pscircle*(-5,0){0.08}
%\pscircle*(-3,1){0.08}

\psset{linecolor=black}

\rput(!2 11.6015 cos mul 0.2 add 3 sub 11.6015 sin 0.2 add){$P_0$}
\rput(!2 45 cos mul 0.2 sub 3 sub 45 sin 0.2 sub){$P_1$}
\rput(!2 45 cos mul neg 0.2 add 3 sub 45 sin neg 0.2 add){$P_2$}

%\rput(-5.6,0){$\mathcal{C}_{\lambda_2}$}
%\rput(.2,0){$\mathcal{C}_{\lambda_1}$}

%druga elipsa

\psset{linecolor=blue,linewidth=1pt}

\psline(!3 2 sqrt neg)(!2 45 cos mul 3 add 45 sin neg)
\psline(!2 135 cos mul 3 add 135 sin neg)(!1.21218 3 add 2.02031 neg)
(!2 11.6015 cos mul 3 add 11.6015 sin neg)
\psline(!3 2 sqrt)(!2 45 cos mul neg 3 add 45 sin)

\psset{linecolor=black,linewidth=1.3pt}

\psellipticarc(3,0)(!8 sqrt 5 sqrt){-90}{90}
\psellipticarc(3,0)(!5 sqrt 2 sqrt){90}{270}
\psellipse(3,0)(2, 1)

\psline(!3 2 sqrt)(!3 5 sqrt)
\psline(!3 2 sqrt neg)(!3 5 sqrt neg)

\psset{linecolor=blue}

\pscircle*(!2 45 cos mul 3 add 45 sin neg){0.08}
\pscircle*(!2 135 cos mul 3 add 135 sin neg){0.08}
\pscircle*(!2 11.6015 cos mul 3 add 11.6015 sin neg){0.08}
\pscircle*(!2 45 cos mul neg 3 add 45 sin neg neg){0.08}

%\psset{linecolor=red}
%\pscircle*(-5,0){0.08}
%\pscircle*(-3,1){0.08}

\psset{linecolor=black}

\rput(!2 11.6015 cos mul 0.3 add 3 add 11.6015 sin neg 0.2 sub){$Q_0$}
\rput(!2 45 cos mul 0.2 sub 3 add 45 sin neg 0.3 add){$Q_1$}
\rput(!2 45 cos mul neg 0.2 add 3 add 45 sin neg neg 0.3 sub){$Q_2$}

\end{pspicture}
\caption{Parametrizations of the caustic.}\label{fig:param}
\end{figure}
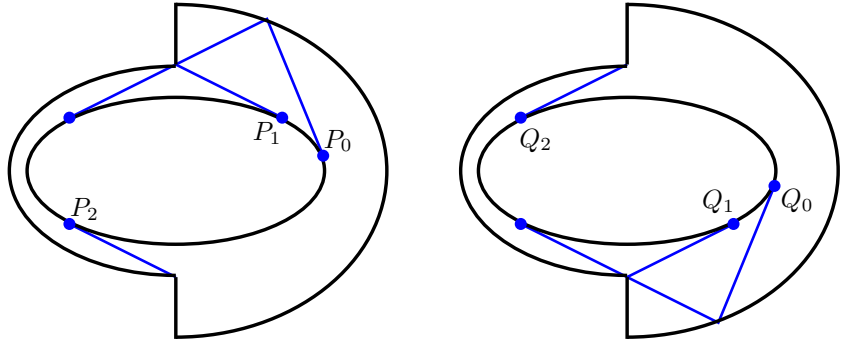

Consider one segment of a billiard trajectory, and let $X\in\mathcal{C}_{\alpha_0}$ be its touching point with the caustic.
Suppose that the particle is moving counterclockwise on that segment.
From Figure \ref{fig:param}, we conclude:
\begin{itemize}
\item
if $X$ is between points $P_1$ and $P_2$ then the particle is going to hit the arc
$\mathcal{C}_{\lambda_2}$;

\item
if $X$ is between $P_2$ and $P_0$, the particle is going to hit the arc
$\mathcal{C}_{\lambda_1}$;

\item
for $X$ between $P_0$ and $P_1$, the particle is going to hit $\mathcal{C}_{\lambda_1}$ and the upper segment before the next contact with the caustic and the direction of motion is changed to clockwise.
\end{itemize}
Similarly, if the particle is moving in clockwise direction, we have:
\begin{itemize}
\item
if $X$ is between points $Q_1$ and $Q_2$ then the particle is going to hit the arc
$\mathcal{C}_{\lambda_2}$;

\item
if $X$ is between $Q_2$ and $Q_0$, the particle is going to hit the arc
$\mathcal{C}_{\lambda_1}$;

\item
for $X$ between $Q_0$ and $Q_1$, the particle is going to hit $\mathcal{C}_{\lambda_1}$ and the lower segment before the next contact with the caustic and the direction of motion is changed to counterclockwise.
\end{itemize}

To see the billiard dynamics as an inteval exchange transformation, we make the following identification:
$$
(X,+)\sim p(X),
\quad
(X,-)\sim q(X).
$$
In other words:
\begin{itemize}
\item
we identify the joint point $X$ of a given trajectory with the caustic with 
$p(X)\in[0,1)$ if the particle is moving in the counterclockwise direction on the corresponding segment;

\item
for the motion in the clockwise direction, we identify $X$ with $q(X)\in[-1,0)$.
\end{itemize}

Denote the rotation numbers $r_1=\rho(\lambda_1)$, $r_2=\rho(\lambda_2)$ (see Proposition \ref{prop:rotation}).

The parametrizations values for points denoted in Figure \ref{fig:param} are:
\begin{gather*}
p(P_0)=0,
\quad
p(P_1)=r_1-r_2,
\quad
p(P_2)=r_1-r_2+\dfrac12,
\\
q(Q_0)=-1,
\quad
q(Q_1)=r_1-r_2-1,
\quad
q(Q_2)=r_1-r_2-\dfrac12.
\end{gather*}

Now, we distinguish three cases depending on the position of point $P_0$ with respect to the $x$-axis (see Figure \ref{fig:param}), i.e.~ on the sign of 
$\dfrac14+\dfrac{r_2}2-r_1$.

\subsection*{$P_0$ is on the $x$-axis: $\dfrac14+\dfrac{r_2}2-r_1=0$}
The interval exchange map is:
$$
\xi\mapsto
\begin{cases}
\xi+r_1+\frac32, &\xi\in[-1,-\frac12-r_1)\\
\xi+r_2, &\xi\in[-\frac12-r_1,-r_1)\\
\xi+r_1-1, &\xi\in[-r_1,0)\\
\xi+r_1-\frac12, & \xi\in[0,\frac12-r_1)\\
\xi+r_2, &\xi\in[\frac12-r_1,1-r_1)\\
\xi+r_1-1, &\xi\in[1-r_1,1),
\end{cases}
$$
as shown in Figure \ref{fig:interval0}.
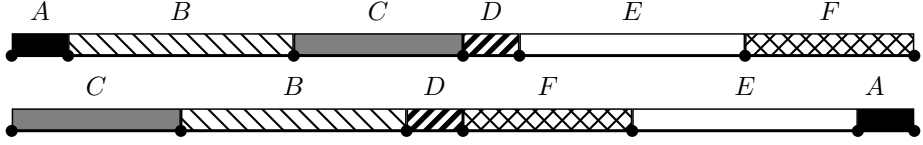
\begin{figure}[h]
\begin{pspicture}(-1.5,-1)(1.5,1)
\psset{xunit=6}

%gornji interval

\psset{linecolor=black,linewidth=0.02,fillstyle=solid}

\psframe[fillcolor=black](-1,0.5)(-0.875,0.8)
\psframe[fillstyle=vlines](-0.875,0.5)(-0.375,0.8)
\psframe[fillcolor=gray](-0.375,0.5)(0,0.8)
\psframe[fillstyle=hlines,hatchwidth=2pt,hatchsep=2pt](0,0.5)(0.125,0.8)
\psframe[fillcolor=white](0.125,0.5)(0.625,0.8)
\psframe[fillstyle=crosshatch](0.625,0.5)(1,0.8)

\psset{linecolor=black,linewidth=1pt}

\psline(-1,0.5)(1,0.5)

\pscircle*(-1,0.5){0.08}
\pscircle*(-0.875,0.5){0.08}
\pscircle*(-0.375,0.5){0.08}
\pscircle*(0,0.5){0.08}
\pscircle*(0.125,0.5){0.08}
\pscircle*(0.625,0.5){0.08}
\pscircle*(1,0.5){0.08}

%\rput(-1.07,0.5){$-1$}
%\rput(-0.875,0.2){$-\frac12-r_1$}
%\rput(-0.375,0.2){$-r_1$}
%\rput(0,0.2){$0$}
%\rput(0.125,0.2){$\frac12-r_1$}
%\rput(0.625,0.2){$1-r_1$}
%\rput(1.04,0.5){$1$}

\rput(-0.9375,1.1){$A$}
\rput(-0.625,1.1){$B$}
\rput(-0.1875,1.1){$C$}
\rput(0.0625,1.1){$D$}
\rput(0.375,1.1){$E$}
\rput(0.8125,1.1){$F$}

%donji interval

\psset{linecolor=black,linewidth=0.02,fillstyle=solid}

\psframe[fillcolor=gray](-1,-.5)(-0.625,-.2)
\psframe[fillstyle=vlines](-0.625,-.5)(-0.125,-.2)
\psframe[fillstyle=hlines,hatchwidth=2pt,hatchsep=2pt](-0.125,-.5)(0,-.2)
\psframe[fillstyle=crosshatch](0,-.5)(0.375,-.2)
\psframe[fillcolor=white](0.375,-.5)(0.875,-.2)
\psframe[fillcolor=black](0.875,-.5)(1,-0.2)

\psset{linecolor=black,linewidth=1pt}

\psline(-1,-.5)(1,-.5)

\pscircle*(-1,-0.5){0.08}
\pscircle*(-0.625,-0.5){0.08}
\pscircle*(-0.125,-0.5){0.08}
\pscircle*(0,-0.5){0.08}
\pscircle*(0.375,-0.5){0.08}
\pscircle*(0.875,-0.5){0.08}
\pscircle*(1,-0.5){0.08}

\rput(-0.8125,.1){$C$}
\rput(-0.375,.1){$B$}
\rput(-0.0625,.1){$D$}
\rput(0.1875,.1){$F$}
\rput(0.625,.1){$E$}
\rput(0.9125,.1){$A$}

\end{pspicture}
\caption{Interval exchange transformation for the case $\frac14+\frac{r_2}2-r_1=0$.}\label{fig:interval0}
\end{figure}

To the map, pair $(\pi,\lambda)$ is joined:
\begin{gather*}
\pi=\left(
\begin{array}{cccccc}
A & B & C & D & E & F \\
C & B & D & F & E & A 
\end{array}
\right),
\\
\lambda=\left(\frac12-r_1,\ \frac12,\ r_1,\ \frac12-r_1,\ \frac12,\ r_1\right).
\end{gather*}

\subsection*{$P_0$ is above the $x$-axis: $\dfrac14+\dfrac{r_2}2-r_1>0$}
The interval exchange map in this case is shown in Figure \ref{fig:interval+} and given by:
$$
\xi\mapsto
\begin{cases}
\xi+r_1+\frac32, &\xi\in[-1,r_1-r_2-1)\\
\xi+r_2, &\xi\in[r_1-r_2-1,r_1-r_2-\frac12)\\
\xi+r_1, &\xi\in[r_1-r_2-\frac12,-r_1)\\
\xi+r_1-1, &\xi\in[-r_1,0)\\

\xi+r_1-\frac12, &\xi\in[0,r_1-r_2)\\
\xi+r_2, &\xi\in[r_1-r_2,r_1-r_2+\frac12)\\
\xi+r_1, &\xi\in[r_1-r_2+\frac12,1-r_1)\\
\xi+r_1-1, &\xi\in[1-r_1,1).
\end{cases}
$$
\begin{figure}[h]
\begin{pspicture}(-1.5,-1)(1.5,1)
\psset{xunit=6}

%gornji interval

\psset{linecolor=black,linewidth=0.02,fillstyle=solid}

\psframe[fillcolor=black](-1,0.5)(-0.9,0.8)
\psframe[fillstyle=vlines](-0.9,0.5)(-0.4,0.8)
\psframe[fillcolor=gray](-0.4,0.5)(-0.35,0.8)
\psframe[fillstyle=hlines,hatchwidth=2pt,hatchsep=2pt](-0.35,0.5)(0,0.8)
\psframe[fillcolor=white](0,0.5)(0.1,0.8)
\psframe[fillstyle=crosshatch](0.1,0.5)(0.6,0.8)
\psframe[fillcolor=gray!30](0.6,0.5)(0.65,0.8)
\psframe[fillstyle=hlines,hatchangle=0,hatchsep=1.5pt](0.65,0.5)(1,0.8)

\psset{linecolor=black,linewidth=1pt}

\psline(-1,0.5)(1,0.5)

\pscircle*(-1,0.5){0.08}
\pscircle*(-0.9,0.5){0.08}
\pscircle*(-0.4,0.5){0.08}
\pscircle*(-0.35,0.5){0.08}
\pscircle*(0,0.5){0.08}
\pscircle*(0.1,0.5){0.08}
\pscircle*(0.6,0.5){0.08}
\pscircle*(0.65,0.5){0.08}
\pscircle*(1,0.5){0.08}

\rput(-0.95,1.1){$A$}
\rput(-0.65,1.1){$B$}
\rput(-0.375,1.1){$C$}
\rput(-0.175,1.1){$D$}
\rput(0.05,1.1){$E$}
\rput(0.35,1.1){$F$}
\rput(0.625,1.1){$G$}
\rput(0.825,1.1){$H$}

%donji interval

\psset{linecolor=black,linewidth=0.02,fillstyle=solid}

\psframe[fillstyle=hlines,hatchwidth=2pt,hatchsep=2pt](-1,-.5)(-0.65,-0.2)
\psframe[fillstyle=vlines](-0.65,-.5)(-0.15,-0.2)
\psframe[fillcolor=white](-0.15,-.5)(-0.05,-0.2)
\psframe[fillcolor=gray](-0.05,-.5)(0,-0.2)
\psframe[fillstyle=hlines,hatchangle=0,hatchsep=1.5pt](0,-.5)(0.35,-0.2)
\psframe[fillstyle=crosshatch](0.35,-.5)(0.85,-0.2)
\psframe[fillcolor=black](0.85,-.5)(0.95,-0.2)
\psframe[fillcolor=gray!30](0.95,-.5)(1,-0.2)

\psset{linecolor=black,linewidth=1pt}

\psline(-1,-.5)(1,-.5)

\pscircle*(-1,-.5){0.08}
\pscircle*(-0.65,-.5){0.08}
\pscircle*(-0.15,-.5){0.08}
\pscircle*(-0.05,-.5){0.08}
\pscircle*(0,-.5){0.08}
\pscircle*(0.35,-.5){0.08}
\pscircle*(0.85,-.5){0.08}
\pscircle*(0.95,-.5){0.08}
\pscircle*(1,-.5){0.08}

\rput(-0.825,.1){$D$}
\rput(-0.4,.1){$B$}
\rput(-0.1,.1){$E$}
\rput(-0.025,.1){$C$}
\rput(0.175,.1){$H$}
\rput(0.6,.1){$F$}
\rput(0.9,.1){$A$}
\rput(0.975,.1){$G$}

\end{pspicture}
\caption{Interval exchange transformation for the case $\frac14+\frac{r_2}2-r_1>0$.}\label{fig:interval+}
\end{figure}

The map can be desribed by the pair $(\pi,\lambda)$:
\begin{gather*}
\pi=\left(
\begin{array}{cccccccc}
A & B & C & D & E & F & G & H\\
D & B & E & C & H & F & A & G 
\end{array}
\right),
\\
\lambda=\left(r_1-r_2,\ \frac12,\ r_2-2r_1+\frac12,\ r_1,\ r_1-r_2,\ \frac12,\ r_2-2r_1+\frac12,\ r_1\right).
\end{gather*}

\subsection*{$P_0$ is below the $x$-axis: $\dfrac14+\dfrac{r_2}2-r_1<0$}
The interval exchange map corresponding to the billiard dynamics is:
$$
\xi\mapsto
\begin{cases}
\xi+r_1+\frac32, &\xi\in[-1,-\frac12-r_1)\\
\xi+r_1-\frac12, &\xi\in[-\frac12-r_1,r_1-r_2-1)\\
\xi+r_2, &\xi\in[r_1-r_2-1,r_1-r_2-\frac12)\\
\xi+r_1-1, &\xi\in[r_1-r_2-\frac12,0)\\

\xi+r_1-\frac12, &\xi\in[0,\frac12-r_1)\\
\xi+r_1-\frac32, &\xi\in[\frac12-r_1,r_1-r_2)\\
\xi+r_2, &\xi\in[r_1-r_2,r_1-r_2+\frac12)\\
\xi+r_1-1, &\xi\in[r_1-r_2+\frac12,1),
\end{cases}
$$
see Figure \ref{fig:interval-}.
\begin{figure}[h]
\begin{pspicture}(-1.5,-1)(1.5,1)
\psset{xunit=6}

%gornji interval

\psset{linecolor=black,linewidth=0.02,fillstyle=solid}

\psframe[fillcolor=black](-1,0.5)(-0.9,0.8)
\psframe[fillstyle=vlines](-0.9,0.5)(-0.75,0.8)
\psframe[fillcolor=gray](-0.75,0.5)(-0.25,0.8)
\psframe[fillstyle=hlines,hatchwidth=2pt,hatchsep=2pt](-0.25,0.5)(0,0.8)
\psframe[fillcolor=white](0,0.5)(0.1,0.8)
\psframe[fillstyle=crosshatch](0.1,0.5)(0.25,0.8)
\psframe[fillcolor=gray!30](0.25,0.5)(0.75,0.8)
\psframe[fillstyle=hlines,hatchangle=0,hatchsep=1.5pt](0.75,0.5)(1,0.8)

\psset{linecolor=black,linewidth=1pt}

\psline(-1,0.5)(1,0.5)

\pscircle*(-1,0.5){0.08}
\pscircle*(-0.9,0.5){0.08}
\pscircle*(-0.75,0.5){0.08}
\pscircle*(-0.25,0.5){0.08}
\pscircle*(0,0.5){0.08}
\pscircle*(0.1,0.5){0.08}
\pscircle*(0.25,0.5){0.08}
\pscircle*(0.75,0.5){0.08}
\pscircle*(1,0.5){0.08}

\rput(-0.95,1.1){$A$}
\rput(-0.825,1.1){$B$}
\rput(-0.5,1.1){$C$}
\rput(-0.125,1.1){$D$}
\rput(0.05,1.1){$E$}
\rput(0.175,1.1){$F$}
\rput(0.5,1.1){$G$}
\rput(0.825,1.1){$H$}

%donji interval

\psset{linecolor=black,linewidth=0.02,fillstyle=solid}

\psframe[fillstyle=crosshatch](-1,-0.5)(-0.85,-0.2)
\psframe[fillstyle=hlines,hatchwidth=2pt,hatchsep=2pt](-0.85,-0.5)(-0.6,-0.2)
\psframe[fillcolor=gray](-0.6,-0.5)(-0.1,-0.2)
\psframe[fillcolor=white](-0.1,-0.5)(0,-0.2)
\psframe[fillstyle=vlines](0,-0.5)(0.15,-0.2)
\psframe[fillstyle=hlines,hatchangle=0,hatchsep=1.5pt](0.15,-0.5)(0.4,-0.2)
\psframe[fillcolor=gray!30](0.4,-0.5)(0.9,-0.2)
\psframe[fillcolor=black](0.9,-0.5)(1,-0.2)

\psset{linecolor=black,linewidth=1pt}

\psline(-1,-.5)(1,-.5)

\pscircle*(-1,-.5){0.08}
\pscircle*(-0.85,-.5){0.08}
\pscircle*(-0.6,-.5){0.08}
\pscircle*(-0.1,-.5){0.08}
\pscircle*(0,-.5){0.08}
\pscircle*(0.15,-.5){0.08}
\pscircle*(0.4,-.5){0.08}
\pscircle*(0.9,-.5){0.08}
\pscircle*(1,-.5){0.08}

\rput(-0.925,.1){$F$}
\rput(-0.725,.1){$D$}
\rput(-0.35,.1){$C$}
\rput(-0.05,.1){$E$}
\rput(0.075,.1){$B$}
\rput(0.275,.1){$H$}
\rput(0.65,.1){$G$}
\rput(0.95,.1){$A$}

\end{pspicture}
\caption{Interval exchange transformation for the case $\frac14+\frac{r_2}2-r_1<0$.}\label{fig:interval-}
\end{figure}

To the map, pair $(\pi,\lambda)$ is joined:
\begin{gather*}
\pi=\left(
\begin{array}{cccccccc}
A & B & C & D & E & F & G & H\\
F & D & C & E & B & H & G & A 
\end{array}
\right),
\\
\lambda=\left(\frac12-r_1,\ 2r_1-r_2-\frac12,\ \frac12, r_2+\frac12-r_1,\ \frac12-r_1,\ 2r_1-r_2-\frac12,\ \frac12, r_2+\frac12-r_1\right).
\end{gather*}

Notice that in all three cases the interval exchange transformations depend only on the rotation numbers $r_1$, $r_2$.
Thus, we got

\begin{theorem}\label{th:nezavisnost}
The billiard dynamics inside the domain $\mathcal{D}_0$ with ellipse $\mathcal{C}_{\alpha_0}$ as the caustic, does not depend on the parameters $a$, $b$ of the confocal family but only on the rotation numbers $r_1$, $r_2$.
\end{theorem}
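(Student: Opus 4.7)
The plan is to observe that the billiard dynamics on the leaf $\mathcal{M}_{\lambda_0}$ admits a Poincar\'e cross-section which, via the identifications of Section \ref{sec:dynamics}, is realized as an interval exchange transformation on $[-1,0)\cup[0,1)$, and to verify that the combinatorial-metric data $(\pi,\lambda)$ of this IET is expressible purely in terms of the rotation numbers $r_1=\rho(\beta_1)$ and $r_2=\rho(\beta_2)$, with no remaining dependence on $a$, $b$, or the specific caustic parameter $\alpha_0$. Two billiards whose reduced systems are conjugated by the same IET have the same dynamics in the stated sense, so this suffices.

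First I would fix the cross-section to be the caustic $\mathcal{C}_{\alpha_0}$ marked with its orientation datum $s_n\in\{+,-\}$, and pass to the parametrizations $p$ and $q$ which are natural with respect to the Poncelet measure $\mu$ of Proposition \ref{prop:rotation}. The defining property of $\mu$ says that a single chord with caustic $\mathcal{C}_{\alpha_0}$ and reflection on $\mathcal{C}_{\beta_i}$ shifts the $p$- (or $q$-) coordinate by exactly $\pm r_i$; hence, between consecutive marks on the cross-section, the particle executes a translation whose amount is one of $r_1$, $r_2$, or an adjustment arising from an axis-reflection together with the accompanying sign change of $s_n$. I would then enumerate the finitely many combinatorial situations, separated by whether the trajectory, on its next chord, meets an ellipse arc or an axis segment. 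The break-points of the piecewise translation are exactly the projections to the caustic of the four tangents issuing from the endpoints of the axis segments, i.e.\ the points $P_0,P_1,P_2,Q_0,Q_1,Q_2$ of Figure \ref{fig:param}, and their $p,q$-coordinates are, by the additivity of $\mu$ along the Poncelet grid, equal to the values $0,\,r_1-r_2,\,r_1-r_2+\tfrac12$ and their $-1$-translates.

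The main obstacle is the case analysis governed by the sign of $\tfrac14+\tfrac{r_2}{2}-r_1$, which controls whether the point $P_0$ sits above, on, or below the $x$-axis and consequently reshuffles the order in which the break-points appear along the interval. In each of the three regimes I would read off, from the geometric discussion in Section \ref{sec:dynamics} on which arc the next impact occurs, both the length vector $\lambda$ and the permutation $\pi$; the resulting triple of explicit IETs is exactly the one displayed in the preceding subsections, and every entry of $\pi$ and $\lambda$ is a polynomial expression in $r_1$ and $r_2$. Since the only appearance of the confocal parameters in the reduction is through $r_1,r_2$, the independence from $a,b$ follows at once from the uniqueness of $\mu$ guaranteed by Proposition \ref{prop:rotation}.
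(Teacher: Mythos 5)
Your proposal is correct and follows essentially the same route as the paper: parametrize the caustic by the measure $\mu$ of Proposition \ref{prop:rotation}, realize the billiard as an interval exchange on $[-1,0)\cup[0,1)$ via the identifications $(X,+)\sim p(X)$, $(X,-)\sim q(X)$, and observe after the three-way case analysis on the sign of $\tfrac14+\tfrac{r_2}2-r_1$ that every entry of $(\pi,\lambda)$ is expressed in $r_1$, $r_2$ alone. The paper's ``proof'' is exactly this remark following the explicit computations of Section \ref{sec:interval}, so no further comparison is needed.
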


\section{Keane condition and minimality}\label{sec:keane}
An interval exchange transformation is called \emph{minimal} if every orbit is dense in the whole domain.
When considering pseudo-billiards, minimal interval exchange transformations will correspond to the cases when all orbits are dense in the domain between the billiard border and the caustic.

Following \cite{Viana}, we are going to formulate a sufficient condition for minimality.
Let $f$ be an interval exchange transformation of $I$, given by pair $(\pi,\lambda)$.
Denote by $p_{\alpha}$ the left endpoint of $I_{\alpha}$.
Then the transformation satisfies \emph{the Keane condition} if:
$$
f^m(p_{\alpha})\neq p_{\beta}
\ \text{for all}\ m\ge1,\ \alpha\in\mathcal{A},\ \beta\in\mathcal{A}\setminus\{\pi_0^{-1}(1)\}. 
$$
Obviously, none of the transformations from Section \ref{sec:interval} satisfies the Keane condition:
namely, the midpoint of the interval is the left endpoint of one of $I_{\alpha}$, and it is the image of another endpoint in the corresponding interval exchange map.

The goal of this section is to find an analogue of the Keane condition for interval exhange transformations appearing in the billiard dynamics.

\subsection{Billiard-like transformations and modified Keane condition}
Analysis of the examples from Section \ref{sec:interval} motivates the following definitions.

\begin{definition}\label{def:billiard-like}
An interval exchange transformation $f$ of $I=[-1,1)$ is \emph{billiard-like} if the partition into subintervals satisfies the following:
\begin{itemize}
 \item
for each $\alpha$, $I_{\alpha}$ is contained either in $[-1,0)$ or $[0,1)$;

 \item
for each $\alpha$, $f(I_{\alpha})$ is contained either in $[-1,0)$ or $[0,1)$;

 \item
both $[-1,0)$ and $[0,1)$ contain at least two intervals of the partition.
\end{itemize}
\end{definition}

\begin{definition}\label{def:keane}
We will say that a billiard-like interval exchange transformation $f$ satisfies \emph{the modified Keane condition} if
$$
f^m(p_{\alpha})\neq p_{\beta}
\ \text{for all}\ m\ge1,\ \alpha\in\mathcal{A},
\ \text{and}\ \beta\in\mathcal{B}\ \text{such that}\ p_{\beta}\not\in\{-1,0\}. 
$$
\end{definition}

\begin{lemma}\label{lema:keane.periodic}
If a billiard-like interval exchange transformation satisfies the modified Keane condition, then the transformation has no periodic points.
\end{lemma}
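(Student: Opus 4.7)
I argue by contradiction. Suppose $x \in I$ is periodic of period $n \geq 1$, and let $J = [c,d)\subseteq I$ be the maximal half-open subinterval containing $x$ on which $f^n$ is the identity. Such a maximal $J$ exists because $f^n$ is piecewise a translation, so its fixed set is a finite union of maximal half-open intervals on each of which the translation amount is $0$. Note that $c\in J$, hence $f^n(c)=c$.

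\textbf{Ruling out $J=I$.} If $J=I$, then $f^n\equiv\mathrm{id}$, and in particular $f^n(p_\alpha)=p_\alpha$ for every partition left endpoint $p_\alpha$. The billiard-like hypothesis guarantees that $[-1,0)$ and $[0,1)$ each contain at least two partition pieces, hence there exists some $p_\alpha\in(-1,0)\cup(0,1)$, i.e.\ $p_\alpha\notin\{-1,0\}$. The modified Keane condition with $m=n$ and target $p_\alpha$ then gives $f^n(p_\alpha)\neq p_\alpha$, a contradiction. So $J\subsetneq I$.

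\textbf{Analyzing the left endpoint and producing a periodic point in $\{-1,0\}$.} If $c>-1$, the maximality of $J$ forces $c$ to be a left-discontinuity point of $f^n$; by tracking how this discontinuity is inherited from $f$, there must exist $j\in\{0,\ldots,n-1\}$ and $\beta\neq\alpha_0$ with $f^j(c)=p_\beta$. Commuting $f^n$ with $f^j$ gives $f^n(p_\beta)=f^n(f^j(c))=f^j(f^n(c))=f^j(c)=p_\beta$, so $p_\beta$ is periodic of period dividing $n$. Since $\beta\neq\alpha_0$ means $p_\beta\neq -1$, the modified Keane condition applied with $m=n$ and target $p_\beta$ leaves only the possibility $p_\beta=0$. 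Hence $0$ lies in the forward orbit of $c$ and is periodic. If instead $c=-1$, then $-1$ itself is periodic; applying the same argument to each of the maximal period-$n$ intervals around $f^i(-1)$ then produces either $f^i(-1)=-1$ for all $i$ (whence the minimal period is $1$, with $f(-1)=-1$) or else forces $0$ into the orbit. Either way, \emph{some} point of $\{-1,0\}$ is periodic.

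\textbf{Closing the contradiction.} By the previous step, I may replace $x$ by a periodic point in $\{-1,0\}$; by symmetry I treat the case $x=0$, so the maximal periodic interval about $0$ takes the form $J^0=[0,d^0)$. Applying the first step to $J^0$ rules out $d^0=1$ (otherwise $[0,1)$ is entirely periodic, and the billiard-like assumption produces a partition left endpoint $p_\alpha\in(0,1)$ violating modified Keane). Thus $d^0<1$, and the right endpoint $d^0$ is a genuine discontinuity of $f^{n_0}$: there exist $k\in\{0,\ldots,n_0-1\}$ and $\gamma\neq\alpha_0$ with $f^k(d^0)=p_\gamma$, and from the translation formula on $J^0$ one derives the relation $d^0=p_\gamma-f^k(0)$. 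Tracing the orbit of the image interval $f^k(J^0)=[f^k(0),p_\gamma)$ together with the iterates of $0$, and invoking modified Keane on iterates of $p_{\alpha^*}=0$ as well as on the candidate periodic partition points thereby produced, forces a partition left endpoint lying in $(-1,0)\cup(0,1)$ to satisfy a forbidden relation of the form $f^m(p_\alpha)=p_\alpha$, contradicting the modified Keane condition.

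\textbf{Main obstacle.} The fundamental difficulty is the asymmetry between the two endpoints of $J$: because $f$ is right-continuous, the left endpoint $c$ belongs to $J$ and is periodic (so the Keane-style implication on $p_\beta$ is immediate), whereas the right endpoint $d$ does not lie in $J$ and $f^n$ genuinely jumps there, so $d$ itself is not periodic and modified Keane cannot be applied directly. Overcoming this is where the billiard-like hypothesis — the presence of the interface $0$ together with at least two partition pieces on each side — plays its essential role: it provides the combinatorial rigidity needed to convert the relation $d^0=p_\gamma-f^k(0)$ at an unmoored right endpoint into a genuine periodicity statement for some $p_\alpha\notin\{-1,0\}$, which the modified Keane condition then precludes.
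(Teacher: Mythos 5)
Your first two steps are sound: they re-derive from scratch the standard fact (which the paper simply quotes from Viana's notes) that a periodic orbit of an interval exchange forces some partition left endpoint $p_\beta$ to be periodic, and the modified Keane condition then localizes that endpoint to $\{-1,0\}$. At that point you are exactly where the paper's proof is after its first two lines.

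The genuine gap is your third step, which is where all the remaining content of the lemma lives. Having reduced to ``$-1$ or $0$ is periodic,'' you must still produce a contradiction, and you do not: the sentence ``Tracing the orbit of the image interval \dots forces a partition left endpoint lying in $(-1,0)\cup(0,1)$ to satisfy a forbidden relation'' states the goal rather than proving it, and your own ``Main obstacle'' paragraph concedes that converting the relation $d^0=p_\gamma-f^k(0)$ into a violation of modified Keane is the crux, describing the billiard-like hypothesis as ``providing the rigidity'' without exhibiting the mechanism. The paper does this concretely. If $f(-1)=-1$, then $f(I_\alpha)=I_\alpha$, so the left endpoint of the partition interval adjacent to $I_\alpha$ --- which lies strictly inside $(-1,0)$ because $[-1,0)$ contains at least two partition pieces --- equals $f(p_\gamma)$ for some $\gamma$, directly violating modified Keane. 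In the higher-period case one first shows $f(0)=-1$ and $f(-1)=0$ (the latter uses that $0$ is the left endpoint of some \emph{image} interval, again by the billiard-like hypothesis), and then compares the lengths $\lambda_\alpha$, $\lambda_\beta$ of the two partition intervals beginning at $-1$ and $0$: in either the equal or unequal case, the point $\lambda_\alpha$ (the right endpoint of $f(I_\alpha)=[0,\lambda_\alpha)$) is of the form $f(p_\gamma)$ and is carried in one further step to the left endpoint of the interval adjacent to $I_\alpha$, which lies strictly inside $(-1,0)$, giving the forbidden relation $f^2(p_\gamma)=p_\delta$. None of this combinatorial argument appears in your write-up. (A smaller unproved assertion: you claim the maximal periodic interval about $0$ has the form $[0,d^0)$, but a priori it could extend to the left of $0$.) As it stands the proof is incomplete precisely at the step that requires the billiard-like structure.
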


\begin{proof}
Suppose the transformation has a periodic point.
Then there is $\alpha\in\mathcal{A}$ such that the left endpoint of $I_{\alpha}$ is periodic (\cite{Viana}), i.e.~  $f^m(p_{\alpha})=p_{\alpha}$ for some $m\ge1$.

By the modified Keane condition $p_{\alpha}\in\{-1,0\}$.
Without losing generality, take $p_{\alpha}=-1$.

Hence we got $f^m(-1)=-1$.
If $m=1$, take $I_{\beta}$ to be the interval adjacent to $I_{\alpha}$.
Notice that $-1<p_{\beta}<0$.
Then $p_{\beta}=f(p_{\gamma})$ for some $\gamma$, which contradicts the modified Keane condition.

Now take $m>1$.
The point $p_{\beta}=f^{-1}(-1)$ is also periodic with period $m$, thus $p_{\beta}=0$, i.e.~ $f(0)=-1$ and $f^m(0)=0$.
Analogously, $f(-1)=0$.

If intervals $I_{\alpha}$ and $I_{\beta}$ are of the same length, then the left endpoints of their adjacent intervals are images of some left endpoints as well, which contradicts the modified Keane condition.
Thus, suppose that $I_{\alpha}$ is shorter than $I_{\beta}$: $\lambda_{\alpha}<\lambda_{\beta}$.
Point $\lambda_{\alpha}\in I$ is the right endpoint of $f(I_{\alpha})$, thus it is the image of a left endpoint of some interval: $\lambda_{\alpha}=f(p_{\gamma})$.
Since $\lambda_{\alpha}\in I_{\beta}$, $f(\lambda_{\alpha})$ is the left endpoint of the interval $I_{\delta}$ adjacent to $I_{\alpha}$.
Thus, $f^2(p_{\gamma})=p_{\delta}$ and $p_{\delta}\not\in\{-1,0\}$, which contradicts the modified Keane condition.
\end{proof}

We say that an interval exchange transformation is \emph{irreducible} if for no $k<|\mathcal{A}|$ the union
$$
I_{\alpha_{\pi_0^{-1}(1)}}\cup\dots\cup I_{\alpha_{\pi_0^{-1}(k)}}
$$
is invariant under the transformation.
The usual Keane condition implies irreducibility.
However, this is not the case for the modified Keane condition -- it may happen that the transformation falls apart into two irreducible transformations on $[-1,0)$ and $[0,1)$.
On the other hand, if for a transformation satisfying the modified Keane condition there is an interval $I_{\alpha}\subset[-1,0)$ such that $f(I_{\alpha})\subset[0,1)$, the irreducibility will also take place.

\begin{proposition}\label{prop:keane.min}
If an irreducible billiard-like interval exchange transformation $f$ satisfies the modified Keane condition, then $f$ is minimal.
\end{proposition}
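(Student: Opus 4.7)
The approach adapts the classical argument that Keane's condition implies minimality for interval exchange transformations, accounting for the fact that in the billiard-like setting orbits of the discontinuities $p_\alpha$ are allowed to collide at the two special points $\{-1, 0\}$.

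By Lemma \ref{lema:keane.periodic}, $f$ has no periodic points; it therefore suffices to show that there is no proper non-empty closed $f$-invariant subset $C \subsetneq I$. Suppose for contradiction that such a $C$ exists. Its complement $U = I \setminus C$ is a non-empty open $f$-invariant set, and hence decomposes as a disjoint union of maximal open intervals $\{J_n\}$ whose total length is at most $2$.

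I would next analyze the $f$-action on the family $\{J_n\}$. Because $f$ is a piecewise translation with a finite discontinuity set $D = \{p_\alpha : \alpha \in \mathcal{A}\}$, each component $J_n$ whose interior misses $D$ is carried bijectively by $f$ onto another component; symmetric statements hold for $f^{-1}$. Hence the components split into finitely many $f$-chains, each starting at a component whose closure meets $D$ and ending at one whose closure meets $f(D)$. In particular, every endpoint of every $J_n$ lies in the forward or backward $f$-orbit of some $p_\alpha$.

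Now the modified Keane condition enters: forward orbits of the $p_\alpha$'s cannot meet any $p_\beta \notin \{-1, 0\}$. Combined with the absence of periodic points, this forces every chain of components to be infinite, because a merging of two chains would correspond to a collision of two such orbits at a discontinuity. The only collisions permitted occur at $\{-1, 0\}$, and a length-counting argument then shows that $U$ must be concentrated entirely on one side of $0$, that is, $C \supset [-1, 0)$ or $C \supset [0, 1)$. Either alternative contradicts the irreducibility hypothesis, completing the proof.

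The hard part will be the bookkeeping in the last step: orbits of the $p_\alpha$ are allowed to merge at the two special points, so tracking how chains of components $J_n$ can close up through $\{-1, 0\}$ requires careful attention. The irreducibility assumption is precisely what is needed to rule out the degenerate case where $C$ equals one of the two halves of $I$; without it, the transformation would split into two independent billiard-like pieces on $[-1, 0)$ and $[0, 1)$, and minimality would fail.
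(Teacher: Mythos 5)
Your overall strategy is the right one, and it is also the skeleton of the paper's own argument: pass to a proper fully invariant open set, decompose it into maximal intervals, show that every endpoint of a component lies on an orbit of a discontinuity, and reduce everything to collisions of such orbits, which the modified Keane condition permits only at $\{-1,0\}$. (The paper realizes the invariant open set concretely as the union $\hat{J}$ of orbits of an interval $J$ disjoint from a non-dense orbit, and gets finiteness of the number of components from the first-return-map lemma of \cite{Viana}; your chain-plus-length bookkeeping could substitute for that, but note that your phrasing is backwards: on a component missing the discontinuity set $f$ is a translation, so the components of a chain of pairwise distinct components all have length at least that of the first one, and an \emph{infinite} chain would force $U$ to have infinite total length. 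The correct conclusion is that chains are finite, hence terminate or merge at components meeting the discontinuities, and the substantive work --- the paper's second step, using Lemma \ref{lema:keane.periodic} --- is to convert this into an explicit forbidden relation $f^{m+n}(p_\beta)=p_\alpha$.)

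The genuine gap is in your endgame, which is precisely where the modified condition differs from the classical one. From ``collisions occur only at $\{-1,0\}$'' it does not follow that $U$ lies on one side of $0$, and no length count gives that. What the modified Keane condition actually leaves open is that the invariant set is a union of partition intervals all of whose connected components have left endpoint $-1$ or $0$, i.e.\ a set of the form $[-1,a)$, $[0,b)$, or $[-1,a)\cup[0,b)$; the first and third of these straddle both halves and are not touched by your dichotomy. Eliminating them requires applying the modified Keane condition once more, to the \emph{right} endpoints: if such a set is invariant and, say, $a\notin\{-1,0\}$, then $a$ is simultaneously the right endpoint of some $f(I_\gamma)$, hence of the form $f(p_{\gamma'})$, and a left endpoint $p_\beta\notin\{-1,0\}$ of a partition interval, a forbidden relation with $m=1$. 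Irreducibility is needed only for the two residual cases $[-1,0)$ and $[0,1)$ --- and even there your step ``$C\supset[-1,0)$ contradicts irreducibility'' is not yet a contradiction with the paper's definition, which concerns invariance of a union of the first $k$ partition intervals: you must first establish that the invariant set is itself such a union (the paper's second step does this for $\hat{J}$ and then repeats the whole analysis for its complement). As written, the decisive part of the proof is an unsupported assertion with the wrong intermediate statement.
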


\begin{proof}
Let $x\in I$ be a point whose orbit is not dense in $I$.
Then there is an interval $J\subset I$ which is disjoint with the orbit of $x$.
Moreover, we can choose $J$ such that it is entirely contained in $I_{\alpha}$ for some $\alpha\in\mathcal{A}$.

It is shown in \cite{Viana} that the first return map of $f$ to $J$ is an interval exchange transformation.
As a consequence, the union $\hat{J}$ of all orbits of points of $J$ is a finite union of intervals and a fully invariant set: $f(\hat{J})=\hat{J}$ (\cite{Viana}).

Moreover, $\hat{J}\neq I$, because $\hat{J}$ is also disjoint with the orbit of $x$.

\emph{First step: we prove that $\hat{J}$ contains a connected component with the left endpoint not in $\{-1,0\}$.}

Suppose the opposite -- that $\hat{J}$ is an interval with the left endpoint equal to $-1$ or $0$, or the union of two such intervals.
If any of the connected components of $\hat{J}$ would be contained in one of the intervals $I_{\alpha}$, then $f\mid I_{\alpha}$ would be the identity map, which leads to a contradiction with the modified Keane condition.
Thus $\hat{J}$ contains some of the intervals of the partition -- let $I_{\alpha_1}$, \dots, $I_{\alpha_k}$ be all of them.
In this case, $I_{\alpha_1}\cup\dots\cup I_{\alpha_k}$ is invariant under the transformation.
If this union, or its connected component, is of the form $[-1,a)$, $a\neq0$, or $[0,b)$, $b\neq1$, this will be in the contradiction with the modified Keane condition; if it is $[-1,0)$ or $[0,1)$ --- the irreducibility property is violated; it cannot concide with the whole interval $I$ because it is disjoint with the orbit of $x$.

We conclude that not all connected components of $\hat{J}$ can be intervals with the left endpoints in $-1$ or $0$.

\emph{Second step: we prove that all left endpoints of connected components of $\hat{J}$ are also left endpoints of the partition intervals.}

Suppose now that $[a,b)$ is a connected component of $\hat{J}$, while $a$ not being a left endpoint of an interval of the partition.
Since $f$ is continuous at inner points of the partition intervals, and $\hat{J}$ is fully invariant, it follows that $f(a)$ is also the left endpoint of some connected component of $\hat{J}$.
If none of the points $f^n(a)$, $n>0$ is a left endpoint of an interval of the partition,
by induction we get that each of these points is on the boundary of some connected component of $\hat{J}$.
There are finitely many such components, thus $a$ is a periodic point, which is not possible by Lemma \ref{lema:keane.periodic}.
Hence, there is $n>0$ such that $f^n(a)$ is a left endpoint of an interval of the partition.
For the smallest such $n$, $p_{\alpha}=f^n(a)\not\in\{-1,0\}$, since $f^{n-1}(a)$ is an inner point of a partition interval.

Similarly, we find $m>0$ such that $f^{-m}(a)=p_{\beta}$ for some $\beta\in\mathcal{A}$.
Now the relation $f^{m+n}(p_{\beta})=p_{\alpha}$ contradicts the modified Keane condition.

\emph{Third step: consider the complement of $\hat{J}$.}

Set $\hat{J}^c=I\setminus\hat{J}$ is a fully invariant nonempty set.
Thus we can prove the same what we proved for $\hat{J}$ -- each connected component of the set has it left endpoint at a left endpoint of an interval of the partition and at least of these endpoints is neither $-1$ nor $0$.

Thus both $\hat{J}$ and $\hat{J}^c$ are fully invariant sets that can be represented as the unions of intervals of the partition.
Since they contain connected components with left endpoints not in $\{-1,0\}$ this leads to a contradiction with the modified Keane condition.

The final contradiction leads us to the conclusion that the initial assumption of the existence of a non-dense orbit was not valid.
\end{proof}

\subsection{An example}
Consider billiard trajectories within domain $\mathcal{D}_0$ with the caustic $\mathcal{C}_{\alpha_0}$, as described in Section \ref{sec:uvod-nosonja}.
In addition, suppose the rotation numbers corresponding to ellipses $\mathcal{C}_{\lambda_1}$ and $\mathcal{C}_{\lambda_2}$ are:
$$
r_1=\frac5{11}+\frac1{22\pi},\quad r_2=\frac5{11}-\frac1{220\pi}.
$$

With given rotation numbers, the Cayley-type conditions from Theorem \ref{th:cayley} can be rewritten in a simpler form.
Namely, a necessary condition for existence of a trajectory within $\mathcal{D}_0$ which becomes closed after $n$ reflections of $\mathcal{C}_{\lambda_1}$ and $m$ reflections off $\mathcal{C}_{\lambda_2}$ is:
$$
nr_1+mr_2\in\mathbf{Z}.
$$
In this case, this condition is satisfied for $n=1$ and $m=10$:
\begin{equation}\label{eq:r1+10r2}
r_1+10r_2=5.
\end{equation}

Since $\dfrac14+\dfrac{r_2}2-r_1>0$, the corresponding interval exhange transformation is given by:
\begin{gather*}
\Pi=\left(
\begin{array}{cccccccc}
A & B & C & D & E & F & G & H\\
D & B & E & C & H & F & A & G 
\end{array}
\right),
\\
\lambda=\left(\frac1{20\pi},\ \frac12,\ \frac1{22}-\frac{21}{220\pi},\ \frac5{11}+\frac1{22\pi},
\ \frac1{20\pi},\ \frac12,\ \frac1{22}-\frac{21}{220\pi},\ \frac5{11}+\frac1{22\pi}\right).
\end{gather*}

\begin{proposition}\label{prop:keane-example}
The transformation $(\Pi,\lambda)$ satisfies the modified Keane condition.
\end{proposition}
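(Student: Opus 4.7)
The plan is to assume for contradiction that $f^m(p_\alpha)=p_\beta$ with $m\ge 1$ and $p_\beta\notin\{-1,0\}$, and exploit the transcendence of $\pi$ to derive a contradiction. Writing $\tau:=1/\pi$, one has $r_1=\frac{5}{11}+\frac{\tau}{22}$ and $r_2=\frac{5}{11}-\frac{\tau}{220}$, so every partition endpoint and every translation used by $f$ lies in $\mathbf{Q}+\mathbf{Q}\tau$. The $\mathbf{Q}$-linear independence of $1$ and $\tau$ allows any such identity to split into an identity on rational parts and one on $\tau$-coefficients. The eight partition endpoints fall into three groups by their $\tau$-coefficient: $\{-1,0\}$ with coefficient $0$; the quadruple $\{p_1,p_2,p_5,p_6\}$ with coefficient $\frac{1}{20}$; and the pair $\{p_3,p_7\}$ with coefficient $-\frac{1}{22}$.

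The map $f$ involves five distinct translations $T_A=r_1+\frac{3}{2}$, $T_B=r_2$, $T_C=r_1$, $T_D=r_1-1$, $T_E=r_1-\frac{1}{2}$, of which only $T_B$ has $\tau$-coefficient different from $\frac{1}{22}$ (namely $-\frac{1}{220}$). Let $a,b,c,d,e$ count the applications of $T_A,\dots,T_E$ in $m$ iterations, so $a+b+c+d+e=m$. A direct computation gives the $\tau$-coefficient of $f^m(p_\alpha)-p_\alpha$ as $\frac{10m-11b}{220}$ and its rational part as $\frac{43a+10b+10c-12d-e}{22}$. Equating $f^m(p_\alpha)=p_\beta$ and separating the two parts yields the Diophantine conditions
\begin{equation*}
10m-11b=t,\qquad 11(4a+b+c-d)=m+22(q_\beta-q_\alpha),
\end{equation*}
where $q$ denotes rational part and $t\in\{-21,-10,0,11,21\}$ equals $220$ times the $\tau$-coefficient of $p_\beta-p_\alpha$. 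Combining these with the inequality $4a+c-d\ge -(m-b)$ (valid since $a,c\ge 0$ and $d\le m-b$) and eliminating $b$ via the first equation produces the central upper bound
\begin{equation*}
m\le\frac{11(q_\beta-q_\alpha)+t}{4}.
\end{equation*}

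The conclusion is a finite check: restricting to $p_\beta\notin\{-1,0\}$ leaves six sub-cases indexed by the $\tau$-groups of the pair $(p_\alpha,p_\beta)$. For each sub-case one reads off $t$, computes the residue class of $m$ modulo $11$ forced by combining the two equations, and compares the smallest positive $m$ in that class with the upper bound. The bound is strictly violated in every sub-case; the only situation in which the modular class admits $m=1$, namely $p_\alpha\in\{p_3,p_7\}$ and $p_\beta\in\{p_1,p_2,p_5,p_6\}$ with $t=21$, is disposed of by observing that then $b=(10\cdot 1-21)/11=-1<0$. The main obstacle is the bookkeeping across the six sub-cases and the careful maximization of $11(q_\beta-q_\alpha)+t$ in each. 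The underlying mechanism, however, is clean: the denominator $11$ in the rational parts of $r_1,r_2$ forces $m$ to lie in a restricted set of residues modulo $11$, while the transcendence of $\pi$ acting through the $\tau$-parts constrains $m$ to be small, and these two requirements are incompatible.
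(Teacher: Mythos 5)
Your proof is correct, and while it starts from the same arithmetic backbone as the paper's — exploiting that $1$, $r_1$, $r_2$ span only a two-dimensional space over $\mathbf{Q}$, with the single relation $r_1+10r_2=5$ encoding the split into rational and $1/\pi$-parts — it closes the argument by a genuinely different mechanism. The paper converts the linear relation into the proportionality $k_1+\alpha-\alpha'=\tfrac{1}{10}(k_2+\beta-\beta')$ between the counts of $r_1$-steps and $r_2$-steps, and then invokes a dynamical observation (no three consecutive iterates lie in $B\cup F$, hence $k_2\le 2k_1+2$) to force the common value to be $0$ and thus $k\le 3$, finishing with a direct check of three iterations of all endpoints. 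You instead refine the bookkeeping to all five translation types, extract from the $\tau$-part a congruence $m\equiv -t\pmod{11}$ forcing $m$ to be large, and from the rational part together with mere nonnegativity of the step counts an explicit upper bound $m\le\bigl(11(q_\beta-q_\alpha)+t\bigr)/4$ forcing $m$ to be small; I verified the six sub-cases and the bound is indeed violated in each (with the $t=21$, $m=1$ exception killed by $b=-1<0$). What your route buys is the elimination of both auxiliary verifications the paper leaves to the reader — the dynamical claim about $B\cup F$ and the terminal finite computation — at the cost of a heavier but fully arithmetic case analysis; the paper's route is shorter to state but rests on those two unproved checks.
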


\begin{proof}
Suppose that $p$ and $p'$ are two endpoints of the intervals such that $p'\not\in\{-1,0\}$ and $f^k(p)=p'$ for some $k\ge1$.
Notice that:
$$
p=\alpha r_1+\beta r_2+\gamma\frac12,\quad
p'=\alpha' r_1+\beta' r_2+\gamma'\frac12,
$$
for some
$\alpha,\alpha'\in\{-1,0,1\}$, $\beta,\beta'\in\{-1,0\}$, $\gamma,\gamma'\in\{-2,-1,0,1,2\}$.

We have:
$$
p'=f^k(p)=p+k_1r_1+k_2r_2+k_3\frac12,
$$
for some integers $k_1$, $k_2$, $k_3$ such that $k_1+k_2=k$, $k_1\ge0$, $k_2\ge0$.
Thus:
\begin{equation}\label{eq:k1r1+k2r2}
(k_1+\alpha-\alpha')r_1+(k_2+\beta-\beta')r_2+(k_3+\gamma-\gamma')\frac12=0.
\end{equation}
Since $r_1$ and $r_2$ are irrational, equations (\ref{eq:r1+10r2}) and (\ref{eq:k1r1+k2r2}) must be dependent:
\begin{equation}\label{eq:a}
a:=k_1+\alpha-\alpha'=\frac1{10}(k_2+\beta-\beta')=-\frac1{10}(k_3+\gamma-\gamma').
\end{equation}
For each $\xi\in B\cup F$, either $f(\xi)$ or $f^2(\xi)$ are not in $B\cup F$, thus
\begin{equation}\label{eq:relk1k2}
k_2\le2k_1+2.
\end{equation}
Combining (\ref{eq:relk1k2}) and (\ref{eq:a}) we get $8a\le7$.
Since $k_2$ is non-negative, (\ref{eq:relk1k2}) gives that $a=0$, which leads to $k=k_1+k_2\le3$.
By direct calculation we check that none of the partition interval endpoints is mapped into another one, different from $-1$ and $0$ by at most three iterations.
\end{proof}

In this example, although the Cayley-type conditon for periodicity is satisfied, not only that closed trajectories do not exist, but each of the trajectories densely fills the ring between the billiard border and the caustic.

\begin{bibdiv}
\begin{biblist}
\bibselect{reference}
\end{biblist}
\end{bibdiv}

\end{document}